\begin{document}

\title{Randomized rounding algorithms for large scale unsplittable flow problems
}

\titlerunning{Randomized rounding algorithms}        

\author{François Lamothe \and
Emmanuel Rachelson \and
Alain Haït \and
Cedric Baudoin \and
Jean-Baptiste Dupe
}


\institute{François Lamothe \at
              ISAE-SUPAERO, Université de Toulouse, France \\
              Corresponding author \\
              ORCID : 0000-0002-1332-9206 \\
              \email{francois.lamothe@isae-supaero.fr}
           \and
           Alain Haït \at
              ISAE-SUPAERO, Université de Toulouse, France \\
              ORCID : 0000-0001-5918-0753
           \and
           Emmanuel Rachelson \at
              ISAE-SUPAERO, Université de Toulouse, France \\
              ORCID : 0000-0002-8559-1617
           \and
           Cedric Baudoin \at
              Thalès Alenia Space, Toulouse, France \\
           \and
           Jean-Baptiste Dupe \at
              Centre national d'études spatiales (CNES), Toulouse, France
}

\date{Received: date / Accepted: date}

\maketitle

\begin{abstract}
Unsplittable flow problems cover a wide range of telecommunication and transportation problems and their efficient resolution is key to a number of applications. In this work, we study algorithms that can scale up to large graphs and important numbers of commodities. We present and analyze in detail a heuristic based on the linear relaxation of the problem and randomized rounding. We provide empirical evidence that this approach is competitive with state-of-the-art resolution methods either by its scaling performance or by the quality of its solutions. We provide a variation of the heuristic which has the same approximation factor as the state-of-the-art approximation algorithm. We also derive a tighter analysis for the approximation factor of both the variation and the state-of-the-art algorithm. We introduce a new objective function for the unsplittable flow problem and discuss its differences with the classical congestion objective function. Finally, we discuss the gap in practical performance and theoretical guarantees between all the aforementioned algorithms.
\keywords{Unsplittable flows \and randomized rounding \and heuristic \and approximation algorithm}
\end{abstract}

\section*{Declarations}

This work was partially funded by Thales Alenia Space and made in collaboration with several of its members. This work was partially funded by the CNES. Several authors are academically related to ISAE-SUPAERO.

The authors declare that they have no conflict of interest.
The datasets and the code used in the experimental section of this work are accessible at \href{https://github.com/SuReLI/randomized_rounding_paper_code}{https://github.com/SuReLI/randomized\_rounding\_paper\_code}.

\section{Introduction}

The unsplittable flow problem is an extensively studied variant of the classical maximum flow problem. In this problem, one is given a directed or undirected graph, together with capacities on its arcs. A family of commodities, each composed of an origin, a destination, and a demand, is also provided. Each commodity has to route its demand from its origin to its destination through a unique path. The routing must ensure that capacities on the arcs are not exceeded by the flow of the commodities, or at least minimize the violation of the capacities.

This problem is NP-hard as it contains several NP-hard problems as sub-cases. When there are only two nodes linked by one arc, the problem is equivalent to the knapsack problem. When all demands and capacities are 1, the problem is equivalent to the edge-disjoint paths problem.

This problem has various applications especially in telecommunication networks (\emph{e.g.} optical networks, telecommunication satellites), and goods transportation. In these applications, large-scale instances appear with up to 500 nodes, 2000 arcs, and 150 000 commodities. However, only a few methods in the literature can scale to such large instances, such as the approximation algorithm of \citet{raghavan1987randomized} and some meta-heuristics tuned to have small computing times. The algorithm presented by \citet{raghavan1987randomized} uses randomized rounding to compute a solution to the unsplittable flow problem. Even though this $O\left(\frac{\ln m}{\ln \ln m}\right)$-approximation algorithm has theoretically the best approximation factor achievable, the solution it yields are often far from optimal in terms of solution quality.

That is why, in this work, we focus on an algorithm that can scale to large instances while giving good practical results. This algorithm is a heuristic extension of the randomized rounding algorithm of \citet{raghavan1987randomized}. As such, it also uses randomized rounding on the linear relaxation of the unsplittable problem to create an unsplittable solution. This algorithm alternates randomized rounding steps and resolutions of the linear relaxation and will thus be called, in this work, the Sequential Randomized Rounding algorithm (SRR). This heuristic is also an extension of the algorithm proposed by \citet{rivano2002lightpath} for which no complete proof of the approximation factor was given. Compared to the algorithm of \citet{rivano2002lightpath}, the SRR heuristic offers more flexibility on the number of linear relaxation resolutions and more importantly, takes advantage of the fact that commodities might have different demand levels. We also describe a variation of the SRR heuristic for which we prove the same approximation guarantees as the algorithm of \citet{raghavan1987randomized}. This variation will be called the Constrained Sequential Randomized Rounding algorithm (CSRR). Moreover, we tighten the analysis of both approximation algorithms and prove that they achieve a $O\left(\frac{\gamma \ln  m}{\ln(\gamma \ln m)}\right)$-approximation factor where $\gamma$ is a parameter that is small when the commodities demands are small compared to the capacities of the arcs. Finally, we experimentally show that the SRR algorithm scales to large instances. Furthermore, its practical results on large datasets outperform other tested methods.

This paper is structured as follows. In Section \ref{sec:problem} we describe the notations used together with several Mixed Integer Linear Programs (MILP) for the unsplittable flow problem. Related work is presented in Section \ref{sec:related}. Section \ref{sec:algo} presents the SRR heuristic and its complexity analysis. Section \ref{sec:proof} describes the CSRR algorithm and provides the analysis leading to the $O\left(\frac{\gamma \ln m}{\ln(\gamma \ln m)} \right) $-approximation factor. In Section \ref{sec:results}, we provide experimental results that compare the empirical quality of the various algorithms presented. In Section \ref{sec:discussion}, we discuss different properties of the SRR heuristic and the CSRR algorithm. Finally, we conclude and give perspectives in Section \ref{sec:conclusion}.

\section{The unsplittable flow problem}
\label{sec:problem}

Throughout this paper, we will use the following notations:
\begin{itemize}
    \item $G = (V,E)$ is a directed or undirected graph, with $V$ the set of nodes and $E$ the set of arcs
    \item $L = (o_k, d_k, D_k)_{k\in K}$ is a set of commodities defined by their origin, destination and demand.
    \item $(c_e)_{e \in E}$ are capacities on the arcs
\end{itemize}

We also use the Kronecker notation, $\delta^y_x$ equals 1 if $x = y$ and 0 otherwise. The sets of arcs incoming and outgoing of node $v$ will be noted $E^-(v)$ and $E^+(v)$ respectively. Finally, $c_{min} = \min_{e \in E} c_e$ and $D_{max} = \max_{k \in K} D_k$ are the smallest capacity and the largest demand.

\subsection{Objective functions}

\label{two_criteria}

Four objective functions for the unsplittable flow problems can be found in the literature: maximizing the served demand \citep{kolman2003note}, minimizing the cost \citep{barnhart2000using}, minimizing the congestion \citep{martens2006flows}, minimizing the number of necessary routing rounds \citep{aumann1995improved}. In this work, we focus on minimizing the congestion. 

The congestion of a graph is the smallest number $\Delta$ by which it is necessary to multiply all the capacities in order to route all the commodities \citep{martens2006flows}. The congestion of an arc is the ratio of the flow on this arc to its capacity. This metric emphasizes low capacity arcs. Besides, minimizing the congestion puts no restrictions on the flow going through the arcs that do not have a maximal congestion. In particular, it induces no incentive to minimize the congestion on those arcs. This becomes problematic when an arc is largely more congested in every solution than any other arc because it lifts all restrictions for the other arcs. To prevent this, we introduce a new criterion to minimize the violation of the capacities of the arcs. We use the term \emph{overflow} to describe the quantity of flow $\Delta_e$ that exceeds the capacity of an arc $e$. The overflow of an arc is always non-negative. Our new criterion is to minimize the sum of the overflows $\sum_{e \in E} \Delta_e$. Note that the congestion $\Delta$ is not the maximum overflow over all the arcs.

In the following sections, we present two equivalent Mixed Integer Linear Programs (MILP) for the unsplittable flow problem.

\subsection{Arc-node formulation}
\label{sec:arc-node-formulation}

The arc-node formulation is compact as it has a polynomial number of variables and constraints in the number of commodities, nodes, and arcs. It can thus be solved directly in a MILP solver for small instances. This formulation is characterized by the flow conservation constraints which ensure the structural property of flows. The objective function presented is the overflow sum.
The meaning of the variables in this formulation is the following:
\begin{itemize}
    \item $f_{ek}$ indicates whether commodity $k$ pushes flow on arc $e$,
    \item $\Delta_e$ represents the overflow on arc $e$.
\end{itemize}

The unsplittable flow problem is then:

\begin{subequations}
\begin{alignat}{2}
&\min_{f_{ek},\Delta_e}  \sum_{e \in E} \Delta_e & \qquad &  \\
&\text{such that} &      & \notag\\
&\sum_{e \in E^+(v)} f_{ek} ~  - \sum_{e \in E^-(v)} f_{ek} = \delta^{o_k}_v - \delta^{d_k}_v & & \forall k \in K, ~ \forall v \in V, \label{eq: flow_const_na}\\
&\sum_{k \in K} f_{ek}  D_k \leq c_e + \Delta_e & & \forall e \in E,  \label{eq: capacity_const_na}\\
& f_{ek} \in \{0,1\}, ~ \Delta_e \in \mathbb{R}^+ & & \forall k \in K, ~ \forall e \in E.
\end{alignat}
\end{subequations}

Equation (\ref{eq: flow_const_na}) corresponds to the flow conservation constraints. It ensures that, for each commodity and every node except the origin and the destination of the commodity, the same amount of flow of the commodity goes in and out of the node. Equation (\ref{eq: capacity_const_na}) corresponds to the capacity constraints. It ensures that the capacity of an arc is respected or that the overflow is recorded in $\Delta_e$. The fact that $ f_{ek} \in \{0,1\}$ ensures that the flow is unsplittable.

One can create an arc-node congestion formulation by replacing $c_e + \Delta_e$ by $c_e \Delta$ in Equation \eqref{eq: capacity_const_na} and minimizing over $\Delta$ instead of $\sum_e \Delta_e$. Note that the $\Delta$ variable is common to all arcs while there was one variable $\Delta_e$ for each arc.

\subsection{Path formulation}
\label{sec:path_formulation}

In the path formulation, the flow conservation constraints are unnecessary. The variables represent paths so these constraints are always verified. However, there is an exponential number of variables (in the number of commodities, nodes, and arcs) so the formulation must be solved through a particular MILP technique named Branch and Price \citep{barnhart2000using}. The objective function presented is the overflow sum. The meaning of the variables in this formulation is the following:
\begin{itemize}
    \item $x_{pk}$ indicates whether commodity $k$ uses path $p$ to push its flow,
    \item $\Delta_e$ represents the overflow on arc $e$.
\end{itemize}

The unsplittable flow problem is then:

\begin{subequations}
\begin{alignat}{2}
&\min_{x_{pk},\Delta_e}  \sum_{e \in E} \Delta_e & \qquad &  \\
&\text{such that} &      & \notag\\
&\sum_{p \in P_k} x_{pk} = 1 & & \forall k \in K, \label{eq: convex_p}\\
&\sum_{k \in K} \sum_{p \in P_k | e \in p} x_{pk} D_k \leq c_e + \Delta_e & & \forall e \in E,  \label{eq: capacity_const_p}\\
& x_{pk} \in \{0,1\}, ~ \Delta_e \in \mathbb{R}^+ & & \forall p \in \bigcup_k P_k, ~ \forall k \in K, ~ \forall e \in E.
\end{alignat}
\end{subequations}

Here, $P_k$ denotes the set of all paths usable by commodity $k$. Equation (\ref{eq: convex_p}) ensures that exactly one path is chosen for each commodity. Equation (\ref{eq: capacity_const_p}) corresponds to the capacity constraints. It ensures that the capacity of an arc is respected or that the overflow is recorded in $\Delta_e$. The fact that $ x_{pk} \in \{0,1\}$ ensures that the flow is unsplittable.

\section{Related work}
\label{sec:related}

In this section, we review important solution approaches to the unsplittable flow problem present in the literature. These works are grouped into three sub-sections: exact methods, approximation algorithms, and meta-heuristics. A fourth sub-section is dedicated to the linear relaxation of the unsplittable flow problem (the multi-commodity flow problem) whose resolution is paramount to the resolution of the unsplittable flow problem.

\subsection{Exact methods}

\citet{barnhart2000using} presented a Branch and Price and Cut procedure applied to a path formulation with a cost minimization objective function. Most subsequent works use this baseline as a comparison. A major contribution of their work is their branching strategy.  Unlike straightforward branching strategies for this problem, theirs keeps intact the structure of the pricing problem throughout the branching procedure. For a commodity in a non-integer solution, the divergence node is the first node where the flow of the commodity splits. The outgoing arcs of the divergence node are divided into two disjoint subsets $E_1$ and $E_2$. Each set must contain at least one of the arcs used by the commodity. The branching rule is: either forbid the use of $E_1$ or forbid the use of $E_2$. In both cases, the previous non-integer solution is cut from the problem and forbidding sets of arcs keeps the structure of the pricing problem intact. They also included cuts to strengthen the linear relaxation. These cuts are lifted cover inequalities of the capacity constraints.

\citet{park2003integer} mixed the path formulation and a knapsack formulation (which is not presented in this work) to derive a new linear formulation of the problem. The linear relaxation of this formulation yields a stronger lower bound, which in turn decreases the time needed to complete the branching procedure. They compared different branching rules and report that the one of \citet{barnhart2000using} produces much better results. They present computational results only for this rule.

\citet{belaidouni2007minimum} presented a cutting plane method based on super-additive functions to get strong cuts for their Branch and Price method. It appears on small instances that the addition of their cuts derives integer solutions without using a Branch and Bound procedure. Results are compared with those of \citet{barnhart2000using} and large improvements are reported.

Overall, the best results can be found in the articles of \citet{belaidouni2007minimum} and \citet{park2003integer}. \citet{belaidouni2007minimum} compared their results with those of \citet{barnhart2000using} and solved all their instances (at most 30 nodes, 60 arcs, 100 commodities) in less than 10 seconds. \citet{park2003integer} did not compare their results with previous works but solved instances of the same magnitude (at most 30 nodes, 80 arcs, 100 commodities) in less than 15 seconds. Note that results were only given for small instances in these approaches.

Other earlier works have been reported in \citep{parker1993column, alvelos2003comparing, park1996integer}.

\subsection{Approximation algorithms and heuristics}
\label{sec:approx}

As the unsplittable flow problem is NP-hard, a lot of attention has been given to approximation algorithms and heuristics. In particular, the maximum served demand objective has been extensively studied. We refer to the Handbook of approximation algorithms \citep{taylor2020handbook} for a detailed survey on approximation algorithms in the context of unsplittable flows. We recall here some works related to the minimum congestion objective.

Approximation algorithms are given with a factor of approximation $\lambda$. Let $\Gamma$ be the objective function of the minimization problem at hand. Solutions generated by a $\lambda$-approximation algorithm verify the following inequality:
$$\Gamma(S^*) \leq \Gamma(S) \leq \lambda \Gamma(S^*),$$
where $\Gamma(S)$ and $\Gamma(S^*)$ are respectively the value of the produced solution and the value of the optimal solution.
This guarantees that the ratio between the value of the produced solution and the value of the optimal solution is not too high. When considering approximation algorithms, more attention must be paid to the objective function being optimized. The literature uses four different objective functions: maximizing the served demand, minimizing the congestion, minimizing the number of rounds, and minimizing the cost. Approximation hardness results demonstrate that none of these objective functions admits a constant-factor approximation algorithm \citep{taylor2020handbook}.

In the congestion minimization context, we seek the smallest number by which it is necessary to multiply all the capacities in order to fit all the commodities. The best-known approximation algorithm for congestion is a randomized rounding method introduced by \citet{raghavan1987randomized} which we shall call the RR algorithm in this work. The method proceeds in two steps. First, a solution of the linear relaxation of the problem is computed. Each commodity is allowed to use multiple paths in this solution. The proportion of flow for each commodity on each path is $((x_{pk})_{p \in P_k})_{k \in K}$. Then a path is selected for each commodity. Path $p$ is selected in $P_k$ with probability $x_{pk}$. Each commodity is then assigned to the selected path to create an unsplittable solution. This procedure produces, with arbitrarily high probability, an unsplittable solution whose congestion is $O\left(\frac{\ln |E|}{\ln \ln |E|}\right)$ larger than the one of the fractional solution. Their algorithm is thus a $O\left(\frac{\ln |E|}{\ln \ln |E|}\right)$-approximation algorithm that works for directed and undirected graphs. The randomized rounding process can be derandomized using the method of conditional probabilities \citep{raghavan1988probabilistic}. \citet{chuzhoy2007hardness} showed a tight $\Omega(\frac{\ln |V|}{\ln \ln |V|})$ bound on directed graphs, assuming $NP \nsubseteq BPTIME(|V|^{O(\ln \ln |V|)})$. \citet{andrews2010inapproximability} showed that minimizing congestion for an unsplittable flow in an undirected graph is hard to approximate within  $\Omega(\ln \ln |V|/\ln \ln \ln |V|)$, assuming $NP \nsubseteq ZPTIME(|V|^{polylog(|V|)})$. A $(|K|+2)$-approximation algorithm is presented in \citep{asano2000experimental}. The reported results show that in practice this algorithm gives results comparable to classical randomized rounding.

In the context of the maximum demand objective function, a parameter $\frac{D_{max}}{c_{min}}$ is introduced. This parameter plays an important role for this objective function. Indeed, when this parameter is upper-bounded by a small constant, several works reported stronger approximation results for their algorithms \citep{chakrabarti2007approximation, shepherd2015inapproximability, azar2006combinatorial}. However, we did not find similar results for the congestion objective function.

A few heuristics have been proposed in previous works. They are either greedy or Linear programming (LP) based heuristics. \citet{rivano2002lightpath} introduced an algorithm very similar to the SRR algorithm presented in Section \ref{sec:algo}, without proving it is an approximation algorithm. \citet{asano2000experimental} as well as \citet{wang1999explicit} proposed greedy algorithms and LP-based algorithms. Reported results show that, except in specific cases, the greedy approaches are usually not competitive with LP-based methods. On the other hand, LP-based heuristics yield results that are similar to the randomized rounding algorithm of \citet{raghavan1987randomized}.

\subsection{Meta-heuristics}

As introduced above, it is NP-hard to find an optimal solution or even to give a constant-factor approximation to the unsplittable flow problem. Thus, the literature investigated various randomized search procedures such as genetic algorithms \citep{cox1991dynamic, masri2019metaheuristics}, tabu search \citep{anderson1993path, laguna1993bandwidth, xu1997tabu}, local search and GRASP \citep{santos2013optimizing, santos2013hybrid, santos2010link, alvelos2007local, masri2015multi, masri2019metaheuristics} or ant colony optimization  \citep{li2010ant, masri2011ant}. One of the major difficulties encountered when solving the unsplittable flow problem with a meta-heuristic is to efficiently create useful paths for the commodities. 

Early approaches such as \citep{cox1991dynamic, anderson1993path} encode solutions as permutations of the commodities. The space of permutations is the one searched by the meta-heuristic. The following function is used to create a solution from a permutation and evaluate it. The function goes through the permutation, examining each commodity. The commodity is then allocated to the shortest path where there is still enough capacity to fit the commodity. Once a path is assigned to every commodity, the objective function can be computed.

In \citep{laguna1993bandwidth} and \citep{masri2015multi} the $k$ shortest paths are pre-computed for each commodity using the algorithm of \citet{yen1971finding}. The search space of their meta-heuristics is restrained to the space of solutions using only those paths.

A different idea used in \citep{santos2013optimizing, santos2013hybrid, santos2010link, alvelos2007local} is to consider paths extracted from the linear relaxation of the problem. The linear relaxation is solved with a column generation algorithm applied to the path formulation. During the column generation, a set of paths $\hat{P}_k$ is generated for each commodity. A meta-heuristic such as a multi-start local search is then used to explore the solutions where only paths from $\hat{P}_k$ are used. In \citep{santos2013optimizing}, after the first linear relaxation is solved, perturbed linear models are solved to create new useful columns and extend the solution space of the meta-heuristic.

Ant colony optimization is also a means to navigate the large solution space of the possible paths and is used in \citep{li2010ant, masri2011ant}. In an ant colony optimization approach, at each iteration, each commodity creates a path by taking into account several metrics: path length, path load, and pheromones. Each arc of the graph has a pheromone level for each commodity and the higher the pheromone level the higher the probability of the arc to belong to the path generated. Pheromones are updated through two means. First, the best solutions add pheromones to the arc they use. Second, pheromones decay so that their level does not become excessive thus facilitating the exploration of the solution space.

We refer to the work of \citet{li2010ant} and \citet{santos2013hybrid} for the best performing meta-heuristics. \citet{li2010ant} compared their results with the solver CPLEX and were able to solve instances with up to 60 nodes, 400 arcs, and 3500 commodities to optimality in less than 900 seconds. \citet{santos2013hybrid} show that all their instances (26 nodes, 80 arcs, 500 commodities) are solved in less than 180 seconds with values close to the linear relaxation lower bound.

\subsection{Linear multi-commodity flow problem} 
\label{LMCFP}

The multi-commodity flow problem is the linear relaxation of the unsplittable flow problem. The value of its optimal solution is a lower bound for the binary problem and this linear relaxation is used in several exact and approximate methods. As a special case of linear programming, this problem is solvable in polynomial time.

A lot of effort has been invested in exact methods for this problem. Even if a commercial solver can solve the node-arc formulation, this method may take a prohibitive time in large instances. An alternative solution is to use a Lagrangian relaxation of the capacity constraints to decompose the problem into easier sub-problems as in \citep{retvdri2004novel}. As reported by \citet{weibin2017solving}, Lagrangian relaxation shows lesser performances in most instances than the competitor method, applying a column generation algorithm to the path formulation. Lagrangian relaxation seems to be the best choice when the number of commodities is very large because its computing time scales only linearly with this parameter. In most other cases, column generation seems to be the best solution. 

Several works contributed to increase the performance of column generation algorithms for this problem. First, a primal-dual column generation is presented in \citep{gondzio2013new, gondzio2015new, gondzio2016large}. An interior-point algorithm is used to solve the master problem and obtain sub-optimal but well-centered solutions. These well-centered solutions are used to compute new columns in the sub-problems which stabilizes the column generation process and reduces the number of iterations needed to achieve convergence. Another approach, which could be combined with the previous one is the use of aggregated variables presented by \citet{bauguion2013new, bauguion2015efficient}. In this method, variables do not represent paths but aggregated paths such as trees or more complex structures. The authors report that the sub-problems associated with aggregated variables can be solved efficiently. Aggregated variables reduce the size of the master problem during the algorithm but might induce a larger number of iterations, thus aggregation must be carefully done. Another method presented in \citep{babonneau2006solving} consists of a specialized interior-point method to solve the multi-commodity flow problem. This method has been improved by \citet{castro2012improving}. Other contributions to linear programming methods can be found in \citep{moradi2015bi, dai2016finding, dai2016node}

For large instances, linear programming methods may take a lot of computing time before finding the optimal solution. Thus the literature focused on combinatorial approximation algorithms and in particular on Fully Polynomial-Time Approximation Schemes (FPTAS). The best results were obtained through the use of exponential length functions. This idea was first introduced by \citet{shahrokhi1990maximum}. In their algorithm, a length exponential in the passing flow is assigned to each arc. The flow is iteratively augmented on the shortest path connecting any of the source-sink pairs. Their algorithm was improved by \citet{fleischer2000approximating} who showed that only the computation of an $\epsilon$-shortest path was needed. The algorithm of \citet{fleischer2000approximating} is the fastest FPTAS in practice while not being the one with the smallest complexity. Indeed, \citet{madry2010faster} presented an algorithm with a smaller complexity but \citet{emanuelsson2016approximating} showed in his work that the algorithm of \citet{fleischer2000approximating} is faster for instances having less than 100 million arcs. For a detailed survey on the multi-commodity flow problem methods previous to 2005, see the article of \citet{wang2018multicommodity}.

\section{The Sequential Randomized Rounding heuristic}
\label{sec:algo}

The Sequential Randomized Rounding algorithm (SRR) is a polynomial greedy heuristic for the unsplittable flow problem. This algorithm is similar to the one proposed by \citet{rivano2002lightpath} for the light-path assignment problem. We add some features that are specific to our problem such as the sorting of the commodities by decreasing demand. The SRR algorithm also gives the possibility to compute less often the linear relaxation of the problem to reduce the overall computing time of the algorithm. Compared to the approximation algorithm of \citet{raghavan1988probabilistic}, the SRR algorithm has larger running times, no performance guarantees but returns higher quality solutions on the tested instances. As a heuristic, the SRR algorithm has a shorter computing time than exact solutions and meta-heuristics, especially for large instances. A variation of the SRR heuristic with approximation guarantees is presented in Section \ref{sec:proof} together with an analysis of these approximation guarantees. Finally, the SRR algorithm is further discussed in Section \ref{sec:discussion} to explain its behavior on the tested instances.

\subsection{Presentation of the algorithm}

The SRR algorithm, presented in Algorithm \ref{heuristic} alternates between two different steps: solve the linear relaxation of the problem and fix some commodities to a unique path. In our case, the linear relaxation is a multi-commodity flow problem. Even though for the sake of clarity we use the notations of the path formulation in the following, in the experimentations, an arc-node formulation paired with the commercial solver \citep{gurobi} is used to solve the linear relaxation. More efficient specialized solvers or approximation algorithms for the multi-commodity flow problem can be found in the literature (see Section \ref{LMCFP}). Solving the linear relaxation provides a distribution of flow among the paths for each commodity: $((x_{pk})_{p \in P_k})_{k \in K}$. After solving the linear relaxation, a path is selected for some commodities. These commodities will be forced to use only these paths for the rest of the algorithm. The path selected for each commodity is chosen through the same randomized rounding procedure introduced by \citet{raghavan1987randomized}: for commodity $k$, path $p$ is selected with probability $x_{pk}$. The probability that commodity $k$ uses arc $e$ is $f_{ek} = \sum_{p \in P_k | e \in p} x_{pk}$. When solving the next linear relaxations, the fixed commodities will also be forced to only use their single allowed path. 

The major difference with the RR algorithm of \citet{raghavan1987randomized} is that, in the SRR heuristic, the linear relaxation is actualized several times during the randomized rounding process. More precisely, after deciding to fix some commodities to a single path, the linear relaxation is solved again with the added constraints that the fixed commodities must use their affected path. To decide when the linear relaxation is actualized, the following procedure is used. In the solution of the linear relaxation, some commodities use multiple paths. After $\theta$ of these commodities are fixed to a single path, the linear relaxation is actualized. Choosing the threshold $\theta$ trades off between computation time and solution quality. If the linear relaxation is actualized often (low threshold), fixing decisions take into account most of the previous fixing decisions but the computation time is high. If the linear relaxation is never actualized, branching decisions do not take into account previous branching decisions but the computation time is low. The threshold value is fixed to $\frac{|V|}{4}$ in our experiments. A sensitivity analysis of this parameter is given in Section \ref{sec:results}.

Another difference with the RR algorithm is that solutions are created using of the overflow sum objective function presented in Section \ref{two_criteria} instead of the classical congestion objective function. As will be explained in Section \ref{compare_criteria}, when using the overflow sum objective function, the SRR heuristic returns solutions with a lower overflow but also a lower congestion.

Compared to the algorithm proposed in \citep{rivano2002lightpath}, the SRR heuristic offers more flexibility on the number of actualizations of the linear relaxation thanks to parameter $\theta$. Moreover, the main difference is that, because the unsplittable flow problem in \citep{rivano2002lightpath} arises from a Light-Path Assignment problem, all commodities have a demand of one and thus the commodities have their path chosen in no particular order. In the SRR heuristic, paths are assigned to the commodities in decreasing order of commodity's demand, \emph{i.e.} the commodities with larger demands have their paths chosen first. This order is classically used in bin packing heuristics such as the Next Fit Decreasing heuristic \citep{csirik1986probabilistic}. The rationale behind this ordering is to allocate commodities with a large demand first, while a large amount of capacity is left in the arcs. Commodities with smaller demands are then used to fill the remaining gaps. In Section \ref{sec:results}, it is shown that this rounding order of the variables has a large impact on the quality of the solution returned by the heuristic.

\begin{algorithm}
    \caption{The SRR heuristic}
    \begin{algorithmic}[1]
    \Require $G = (V,E,c)$ a capacitated graph, $L = (o_k, d_k, D_k)_{k \in K}$ a list of commodities
    \State Sort the commodities by decreasing demand \label{algo:line1}
    \State $K_{fixed} = \varnothing$ \Comment{$K_{fixed}$ is the set of indices of commodities fixed to a single path}
    \For{each commodity $k^*$ in decreasing demand order}
    \If {an actualization is needed}
    \State $((x_{pk})_{p \in P_k})_{k \in K} = Solve\_Linear\_Relaxation \left( G, L, K_{fixed}, (p_k)_{k \in K_{fixed}} \right)$ \label{algo:lineLP}
    \EndIf
    \State Draw a path $p^*$ from $P_{k^*}$ with probability $x_{p^*k^*}$ \label{algo:line5}
    \State Add index $k^*$ to $K_{fixed}$. \label{algo:remove}
    \State $p_{k^*} = p^*$ \label{algo:linefixing}
    \EndFor
    \State \Return $(p_k)_{k \in K}$
    \end{algorithmic}
    \label{heuristic}
\end{algorithm}

\subsection{Complexity analysis}
There are $|K|$ iterations and during each of them the algorithm might do the following:

\begin{itemize}
    \item Solve the linear relaxation: $O(LR(|V|, |E|, |K|))$ operations where $LR(|V|, \\ |E|, |K|)$ is the complexity of the linear relaxation resolution (line \ref{algo:lineLP}).
    \item Select $p_*$: as the flow of each commodity can be decomposed in at most $|E|$ paths, at most $|E|$ of the variables $x_{pk}$ have a non-zero value \citep{ford1956network}. Choosing one of these variables requires $O(|E|)$ operations (line \ref{algo:line5}).
\end{itemize}

Additionally, sorting the commodities requires $O(|K| \log(|K|))$ operations (line \ref{algo:line1}). The total time complexity is $O(|K|(\log |K| + |E| + LR(|V|, |E|, |K|)))$.

\subsection{Grouping commodities by origin}
\label{sec:aggregation}

Grouping commodities by origin is the process of considering a set of commodities with the same origin as one single commodity. This new commodity has a demand equal to the sum of the original demands. To ensure that the flow goes to the right destinations, a super-destination node is created and connected to each original destination with an arc of capacity equal to the original demand of the commodity. When solving the \emph{linear relaxation} with grouped commodities, the same solution is computed at the condition that all commodities of a group originate from the same node. Grouping commodities can greatly reduce the computing time of a linear solution computed with an LP solver when the number of different origins is much smaller than the number of commodities. In our test cases, inspired by practical telecommunication instances, commodities are emitted from a small number of different origins thus it is efficient to group the commodities by origin. However, solutions produced when commodities are grouped yield a little less information. It is necessary to compute exactly what paths each commodity uses. This can be done quickly in $O(|V|(|E| + |K|))$ operations with a flow decomposition algorithm \citep{ford1956network}.

\section{A variation of the heuristic with approximation guarantees}
\label{sec:proof}

In this section, we present the Constrained Sequential Randomized Rounding algorithm (CSRR). It is a variation of the SRR heuristic for which we prove approximation guarantees similar to the one of the RR algorithm of \citet{raghavan1987randomized}. Approximation results are obtained considering the classical congestion objective function. Indeed, for the overflow sum objective, the value of the optimal solution may be zero. This happens when all commodities fit in the capacities. In this case, any approximation algorithm must find the optimal solution. Thus the overflow sum objective function is not suited for approximation proofs.

The RR algorithm is able to yield a solution satisfying a $O\left(\frac{ \ln(|E|\epsilon^{-1})}{\ln( \ln(|E|\epsilon^{-1}))}\right)$-approximation factor with probability $1-\epsilon$. We extend and tighten the analysis of randomized rounding algorithms by giving a new $O\left(\frac{\gamma \ln(|E|\epsilon^{-1})}{\ln(\gamma \ln(|E|\epsilon^{-1}))}\right)$ factor for the CSRR algorithm and for the RR algorithm of \citet{raghavan1987randomized}. In this new factor, $\gamma$ is the granularity parameter of the instance which is equal to $\frac{D_{max}}{c_{min} \Delta^*}$ where $\Delta^*$ is the optimal congestion of the linear relaxation. This parameter is small when the commodities are smaller compared to $c_{min} \Delta^*$. The parameter $\gamma$ can be related to the parameter $\frac{D_{max}}{c_{min}}$ introduced by \citet{chakrabarti2007approximation, shepherd2015inapproximability, azar2006combinatorial} to tighten their approximation analysis in the case of the maximum demand objective. The number $\gamma$ is a decisive parameter of unsplittable flow instances. Indeed, it remains constant when the capacities or the demands are scaled uniformly.

To prove an approximation factor for a randomized rounding where the linear relaxation is actualized (as in the line \ref{algo:lineLP} of Algorithm \ref{heuristic}), it appeared necessary to add a constraint to the linear relaxation. Thus the CSRR algorithm is the same as the SRR algorithm with the following additional constraint in the linear relaxation:
\begin{align}
    \sum_{k \in K \setminus K_{fixed}} f_{ek} D_k \leq c_e \Delta^* - \sum_{k \in K_{fixed}} \hat{f}_{ek} D_k, ~ \forall e \in E, \label{added_constraint}
\end{align}
where $\Delta^*$ is the optimal congestion of the first linear relaxation; $K_{fixed}$ is the set of commodities that were fixed to their respective single paths before the current linear relaxation resolution; $f_{ek}$ are the variables used to optimize the flow of the unfixed variables. The commodities in $K_{fixed}$ sent their flow on several paths in the linear relaxation before they were fixed to one path; $\hat{f}_{ek}$ is the corresponding fractional flow on each arc for these commodities. Note that for the first computation of the linear relaxation, this constraint has no impact on the final solution and can be removed. Thus, this constraint disappears in the algorithm of \citet{raghavan1987randomized}, since there is only one resolution of the linear relaxation.

For the sake of clarity, we derive the present proof in the case where the CSRR algorithm makes only a single actualization of the linear relaxation's solution which occurs just after the first rounding step. Extension to the case of several actualizations performed at any time can be done by induction.

In the following, let the discrete random variables $F_{ek}$ indicate the flow of commodity $k$ on arc $e$ in the solution returned by the CSRR algorithm. The variables $F_{ek}$ take the value $D_k$ with probability $f_{ek} = \sum_{p \in P_k | e \in p} x_{pk}$ and 0 otherwise. Thus, their expectation $\mathbb{E}[F_{ek}] = f_{ek} D_k = D_k \sum_{p \in P_k | e \in p} x_{pk}$ is also the flow of commodity $k$ on arc $e$ in the solution of the linear relaxation. Let $k_1$ be the index of the commodity of largest demand (thus the first one to be fixed in Algorithm \ref{heuristic}) and let $F_e = \sum_{k \in K \setminus \{k_1\}} F_{ek}$. Once conditioned by $F_{ek_1}$, the random variables $F_{ek}$ ($k \neq k_1$) are independent of each other because the linear relaxation is not actualized between their rounding step. However, the random variables $F_{ek}$ ($k \neq k_1$) are not independent of $F_{ek_1}$; in particular, we have $\mathbb{E}[F_{ek} | F_{ek_1}] \neq \mathbb{E}[F_{ek}]$. Indeed the realization of $F_{ek_1}$ in the unique rounding step conditions the resolution of the subsequent linear relaxation. Thus, it conditions the values $f_{ek}$ which parametrize the distribution of the random variables $F_{ek}$. To recall this dependency, we write $f_{ek}(F_{ek_1})$ the fractional flow of commodity $k$ on arc $e$. Note that constraint \eqref{added_constraint} added in the CSRR algorithm can be re-written in terms of random variables:
$$\mathbb{E}[F_e | F_{ek_1}] \leq c_e \Delta^* - \mathbb{E}[F_{ek_1}].$$
Recall that in the congestion formulation, the objective function aims at minimizing the minimum multiplicative factor on all arc capacities needed to fit the commodities. We note $C^*$ the optimal congestion for the considered unsplittable flow instance. As introduced above, $F_{ek_1} + F_e$ is the flow on arc $e$ in the solution returned by the CSRR algorithm. Thus, proving a probabilistic $(1+\alpha)$-approximation for this algorithm bound boils down to proving that for all arcs, $F_{ek_1} + F_e$ remains below $(1+\alpha) c_e C^*$ with high probability. Formally, for a small $\epsilon$:
$$\mathbb{P} \left( \forall e\in E, F_{ek_1} + F_e \leq (1+\alpha) c_e C^* \right) \geq 1-\epsilon.$$
Conversely, this is equivalent to proving that, with at most probability $\epsilon$, there exists an arc $e$ where the congestion exceeds $(1+\alpha) C^*$:
$$\mathbb{P} \left( \exists e\in E, F_{ek_1} + F_e \geq (1+\alpha) c_e C^* \right) \leq \epsilon.$$
To that end, we prove in Theorem \ref{theorem1} that for every arc $e$:
$$\mathbb{P} \left(F_{ek_1} + F_e \geq (1+\alpha) c_e \Delta^* \right) \leq \frac{\epsilon}{|E|}.$$
Indeed, in this case, as $\Delta^*$ is a lower bound on $C^*$, we have:
\begin{align*}
    \mathbb{P} \left( \exists e\in E, F_{ek_1} + F_e \geq (1+\alpha) c_e C^* \right) & = \mathbb{P} \left( \bigvee_{e\in E} F_{ek_1} + F_e \geq (1+\alpha) c_e C^* \right)\\
    & \leq \mathbb{P} \left( \bigvee_{e\in E} F_{ek_1} + F_e \geq (1+\alpha) c_e \Delta^* \right) \\
    & \leq \sum_{e\in E} \mathbb{P} \left( F_{ek_1} + F_e \geq (1+\alpha) c_e \Delta^* \right) \\
    & \leq \sum_{e\in E} \frac{\epsilon}{|E|}\\
    & = \epsilon
\end{align*}
To ensure that $\mathbb{P} \left(F_{ek_1} + F_e \geq (1+\alpha) c_e \Delta^* \right) \leq \frac{\epsilon}{|E|}$, we first prove through Lemma \ref{lem:lemma2} that the probability $\mathbb{P} \left(F_{ek_1} + F_e \geq (1+\alpha) c_e \Delta^* \right)$ is upper bounded by a quantity $g_e(\alpha)$. Lemma \ref{lem:lemma2} is proved by using the Markov inequality and the probabilistic translation of constraint \eqref{added_constraint}. Proving Lemma \ref{lem:lemma2} requires a preliminary result introduced in Lemma \ref{lem:lemma1}. Finally, the proof is completed by showing that there exists a value for $\alpha$  satisfying $1+\alpha = O\left(\frac{\gamma \ln(|E|\epsilon^{-1})}{\ln(\gamma \ln(|E|\epsilon^{-1}))}\right)$ and for every arc $g_e(\alpha) \leq \frac{\epsilon}{|E|}$

Without loss of generality and to remove $D_{max}$ from the proof, we assume the considered instances are scaled so that $D_{max} = 1$ and thus $\gamma = (c_{min} \Delta^*)^{-1}$. We now present the two Lemmas together with their proof.

\begin{lemma}
\label{lem:lemma1}
For any positive scalar $\alpha$, $\mathbb{E}\left[ (1+\alpha)^{F_e} | F_{ek_1}\right] \leq e^{\alpha \mathbb{E}[F_e | F_{ek_1}]}$.
\end{lemma}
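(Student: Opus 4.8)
The plan is to run the standard Chernoff-style moment generating function argument, exploiting the conditional independence of the $F_{ek}$ ($k \neq k_1$) given $F_{ek_1}$ that was established in the setup. First I would use this conditional independence to factorize the conditional expectation of the product:
$$\mathbb{E}\left[ (1+\alpha)^{F_e} \mid F_{ek_1}\right] = \mathbb{E}\left[ \prod_{k \neq k_1} (1+\alpha)^{F_{ek}} \,\Big|\, F_{ek_1}\right] = \prod_{k \neq k_1} \mathbb{E}\left[ (1+\alpha)^{F_{ek}} \mid F_{ek_1}\right].$$

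Each factor is then computed exactly from the two-point law of $F_{ek}$: conditionally on $F_{ek_1}$, the variable $F_{ek}$ equals $D_k$ with probability $f_{ek}(F_{ek_1})$ and $0$ otherwise, so
$$\mathbb{E}\left[ (1+\alpha)^{F_{ek}} \mid F_{ek_1}\right] = 1 + f_{ek}\left((1+\alpha)^{D_k} - 1\right).$$

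The crux is to linearize the term $(1+\alpha)^{D_k}$. Since the instance is scaled so that $D_{max} = 1$, every $D_k$ lies in $[0,1]$, and the convexity of the map $x \mapsto (1+\alpha)^x$ gives the chord bound $(1+\alpha)^{D_k} \leq 1 + \alpha D_k$ on this interval. Substituting this estimate, together with the elementary inequality $1 + x \leq e^x$, bounds each factor by $e^{\alpha f_{ek} D_k}$. Taking the product over $k \neq k_1$ and recognizing that $\sum_{k \neq k_1} f_{ek} D_k = \sum_{k \neq k_1} \mathbb{E}[F_{ek} \mid F_{ek_1}] = \mathbb{E}[F_e \mid F_{ek_1}]$ by conditional linearity of expectation then yields the claimed bound.

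The one step to handle with care is the convexity bound $(1+\alpha)^{D_k} \leq 1 + \alpha D_k$, which holds only because $D_k \leq 1$; this is precisely where the normalization $D_{max} = 1$ enters, and it is what keeps the resulting exponent linear in the demands. Everything else is routine: the factorization, the two-point conditional expectation, and the inequality $1+x \leq e^x$ present no difficulty. In particular, the conditioning on $F_{ek_1}$ causes no trouble, since once we condition the quantities $f_{ek}(F_{ek_1})$ are fixed numbers and the stated conditional independence lets the expectation of the product split cleanly into the product of expectations.
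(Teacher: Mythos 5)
Your proposal is correct and follows essentially the same route as the paper: factorize via conditional independence, evaluate each factor from the two-point law, linearize $(1+\alpha)^{D_k}$ to $1+\alpha D_k$ using $D_k \le 1$, and finish with $1+x \le e^x$. Your explicit convexity (chord) justification for the linearization step is a welcome clarification of the paper's terser ``because $D_k \le 1$,'' but it is the same argument.
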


\begin{proof}

\begin{alignat*}{3}
& \mathbb{E}\left[(1+\alpha)^{F_e} | F_{ek_1}\right] ~ = ~ \mathbb{E}\left[ \prod_{k \in K \setminus \{k_1\}} (1+\alpha)^{F_{ek}} | F_{ek_1}\right]\\
& ~ = ~ \prod_{k \in K \setminus \{k_1\}} \mathbb{E}\left[(1+\alpha)^{F_{ek}} | F_{ek_1}\right] \quad \text{because the $F_{ek}|F_{ek_1}$ are independent} \\
& ~ = ~ \prod_{k \in K \setminus \{k_1\}} (f_{ek}(F_{ek_1})(1+\alpha)^{D_k} + 1 - f_{ek}(F_{ek_1})) & \quad &\\
& ~ \leq ~ \prod_{k \in K \setminus \{k_1\}} (f_{ek}(F_{ek_1})(1+\alpha D_k) + 1 - f_{ek}(F_{ek_1}))   \quad \text{because }  D_k \leq 1\\
& ~ = ~ \prod_{k \in K \setminus \{k_1\}} (1 + \alpha f_{ek}(F_{ek_1}) D_k) & \quad &\\
& ~ \leq ~ \prod_{k \in K \setminus \{k_1\}} e^{\alpha f_{ek}(F_{ek_1}) D_k} & \quad &\\
& ~ = ~  e^{\alpha \sum_{k \in K \setminus \{k_1\}} f_{ek}(F_{ek_1}) D_k} & \quad &\\
& ~ = ~ e^{\alpha \mathbb{E}[F_e | F_{ek_1}]} & \quad &\\
\end{alignat*}
\qed
\end{proof}

\begin{lemma}
\label{lem:lemma2}
For any positive scalar $\alpha$ and any instance of the unsplittable flow problem, the flow $F_{ek_1} + F_e$ returned by the CSRR algorithm on arc $e$ satisfies:
$$\mathbb{P}(F_{ek_1} + F_e \geq (1+\alpha)c_e\Delta^*) \leq \left[\frac{e^\alpha}{(1+\alpha)^{1 + \alpha}}\right]^{c_e\Delta^*}$$
\end{lemma}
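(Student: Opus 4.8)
The plan is to establish this Chernoff-type tail bound via the standard exponential-moment (Markov) method, but carefully routed through the conditioning on $F_{ek_1}$ so that Lemma~\ref{lem:lemma1} and constraint~\eqref{added_constraint} can both be brought to bear. First I would apply Markov's inequality to the nonnegative random variable $(1+\alpha)^{F_{ek_1}+F_e}$, writing
$$\mathbb{P}(F_{ek_1}+F_e \geq (1+\alpha)c_e\Delta^*) \leq \frac{\mathbb{E}\left[(1+\alpha)^{F_{ek_1}+F_e}\right]}{(1+\alpha)^{(1+\alpha)c_e\Delta^*}}.$$
The whole proof then reduces to showing the numerator is at most $e^{\alpha c_e \Delta^*}$, since dividing by the denominator immediately yields the claimed expression $\left[e^\alpha/(1+\alpha)^{1+\alpha}\right]^{c_e\Delta^*}$.

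To bound the numerator I would condition on $F_{ek_1}$ using the tower property, $\mathbb{E}[(1+\alpha)^{F_{ek_1}+F_e}] = \mathbb{E}\big[(1+\alpha)^{F_{ek_1}}\,\mathbb{E}[(1+\alpha)^{F_e}\mid F_{ek_1}]\big]$. Lemma~\ref{lem:lemma1} bounds the inner conditional expectation by $e^{\alpha\mathbb{E}[F_e\mid F_{ek_1}]}$, and the probabilistic form of constraint~\eqref{added_constraint}, namely $\mathbb{E}[F_e\mid F_{ek_1}] \leq c_e\Delta^* - \mathbb{E}[F_{ek_1}]$, upgrades this to the deterministic bound $e^{\alpha c_e\Delta^* - \alpha\mathbb{E}[F_{ek_1}]}$. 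Factoring out this constant leaves me with $e^{\alpha c_e\Delta^*}\,e^{-\alpha\mathbb{E}[F_{ek_1}]}\,\mathbb{E}[(1+\alpha)^{F_{ek_1}}]$.

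It remains to control the single-variable factor $\mathbb{E}[(1+\alpha)^{F_{ek_1}}]$. Since $F_{ek_1}$ equals $D_{k_1}$ with probability $f_{ek_1}$ and $0$ otherwise, and $D_{k_1}\leq D_{max}=1$, I would reuse exactly the two inequalities from Lemma~\ref{lem:lemma1} --- first $(1+\alpha)^{D_{k_1}}\leq 1+\alpha D_{k_1}$, then $1+x\leq e^x$ --- to obtain $\mathbb{E}[(1+\alpha)^{F_{ek_1}}] \leq 1+\alpha f_{ek_1}D_{k_1} = 1+\alpha\mathbb{E}[F_{ek_1}] \leq e^{\alpha\mathbb{E}[F_{ek_1}]}$. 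Multiplying through, the $e^{-\alpha\mathbb{E}[F_{ek_1}]}$ and $e^{\alpha\mathbb{E}[F_{ek_1}]}$ factors cancel exactly, leaving the numerator bounded by $e^{\alpha c_e\Delta^*}$, which completes the argument.

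The main obstacle --- and the whole reason constraint~\eqref{added_constraint} was introduced --- is the statistical dependence between $F_{ek_1}$ and the remaining flows $F_{ek}$: one cannot simply treat all the $F_{ek}$ as mutually independent and apply a single Chernoff bound. The conditioning isolates $F_{ek_1}$, whose contribution is handled directly, from the conditionally independent tail $F_e$, which is handled by Lemma~\ref{lem:lemma1}; and it is precisely the exact cancellation of the two $\alpha\mathbb{E}[F_{ek_1}]$ terms that makes the conditional-expectation bound from the added constraint combine cleanly with the first-commodity factor. I would double-check that this cancellation is exact rather than merely an inequality, since any slack there --- in particular a mismatch between the unconditional $\mathbb{E}[F_{ek_1}]$ appearing in \eqref{added_constraint} and the $f_{ek_1}D_{k_1}$ appearing in the single-variable factor --- would pollute the final bound.
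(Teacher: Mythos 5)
Your proposal is correct and follows essentially the same route as the paper's proof: Markov's inequality applied to $(1+\alpha)^{F_{ek_1}+F_e}$, the tower property to isolate $F_{ek_1}$, Lemma~\ref{lem:lemma1} plus the probabilistic form of constraint~\eqref{added_constraint} for the conditional part, and the same pair of elementary inequalities for the single-variable factor $\mathbb{E}[(1+\alpha)^{F_{ek_1}}]$, with the exact cancellation of the $\alpha\mathbb{E}[F_{ek_1}]$ terms. Your closing check that the cancellation is exact (both terms being $f_{ek_1}D_{k_1}$) is precisely the step the paper relies on.
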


\begin{proof}

\quad We note $\delta = (1+\alpha)^{(1 + \alpha)c_e\Delta^*}$

\begin{alignat*}{3}
&\mathbb{P}(F_{ek_1} + F_e \geq (1+\alpha)c_e\Delta^*) \\
& ~ = ~ \mathbb{P}\left((1+\alpha)^{F_{ek_1} + F_e} \geq \delta \right)  & \quad &\\
& ~ \leq ~ \delta^{-1} \mathbb{E}\left[(1+\alpha)^{F_{ek_1} + F_e}\right] & \quad & \text{Markov inequality}\\
& ~ = ~ \delta^{-1} \mathbb{E}\left[\mathbb{E}[(1+\alpha)^{F_{ek_1} + F_e}|F_{ek_1}]\right] & \quad &\\
& ~ = ~ \delta^{-1} \mathbb{E}\left[(1+\alpha)^{F_{ek_1}}\mathbb{E}[(1+\alpha)^{F_e}|F_{ek_1}]\right] & \quad &\\
& ~ \leq ~ \delta^{-1} \mathbb{E}\left[(1+\alpha)^{F_{ek_1}}e^{\alpha \mathbb{E}[F_e | F_{ek_1}]}\right] & \quad & \text{Lemma 1}\\
& ~ \leq ~ \delta^{-1} \mathbb{E}\left[(1+\alpha)^{F_{ek_1}}e^{\alpha (c_e \Delta^* - \mathbb{E}[F_{ek_1}])}\right] & \quad & \text{because of constraint \eqref{added_constraint}}\\
& ~ = ~ \delta^{-1} e^{\alpha (c_e \Delta^* - \mathbb{E}[F_{ek_1}])} \mathbb{E}\left[(1+\alpha)^{F_{ek_1}}\right] & \quad &\\
& ~ = ~ \delta^{-1} e^{\alpha (c_e \Delta^* - \mathbb{E}[F_{ek_1}])} (f_{ek_1} (1+\alpha)^{D_{k_1}} + 1 - f_{ek_1}) \\
& ~ \leq ~ \delta^{-1} e^{\alpha (c_e \Delta^* - \mathbb{E}[F_{ek_1}])} (f_{ek_1} (1+\alpha D_{k_1}) + 1 - f_{ek_1}) & \quad & \text{because } D_{k_1} \leq D_{max} \leq 1\\
& ~ \leq ~ \delta^{-1} e^{\alpha (c_e \Delta^* - \mathbb{E}[F_{ek_1}])} e^{\alpha f_{ek_1} D_{k_1}} & \quad &\\
& ~ = ~ \delta^{-1} e^{\alpha (c_e \Delta^* - \mathbb{E}[F_{ek_1}] + \mathbb{E}[F_{ek_1}])} & \quad & \\
& ~ = ~ \delta^{-1} e^{\alpha c_e \Delta^*} & \quad & \\
& ~ = ~ \left[\frac{e^\alpha}{(1+\alpha)^{1 + \alpha}}\right]^{c_e\Delta^*} & \quad &
\end{alignat*}
\qed
\end{proof}

We now present the main theorem which upper-bounds for every arc the probability of high congestion in the solution returned by the CSRR algorithm.

\begin{theorem}
\label{theorem1}
For any $\epsilon > 0$, there exists an approximation factor $1+\alpha$ which is $O\left(\frac{\gamma \ln(|E|\epsilon^{-1})}{\ln(\gamma \ln(|E|\epsilon^{-1}))}\right)$ such that, for any instance of the unsplittable flow problem, the flow $F_{ek_1} + F_e$ returned by the CSRR algorithm on arc $e$ satisfies:
$$\mathbb{P} \left(F_{ek_1} + F_e \geq (1+\alpha) c_e \Delta^* \right) \leq \frac{\epsilon}{|E|}.$$
\end{theorem}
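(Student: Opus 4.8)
The plan is to start from the bound established in Lemma \ref{lem:lemma2} and then choose $\alpha$ large enough to push its right-hand side below $\frac{\epsilon}{|E|}$. Writing $g_e(\alpha) = \left[\frac{e^\alpha}{(1+\alpha)^{1+\alpha}}\right]^{c_e\Delta^*}$, I would first observe that the base $\frac{e^\alpha}{(1+\alpha)^{1+\alpha}}$ lies in $(0,1)$ for every $\alpha > 0$, since $(1+\alpha)\ln(1+\alpha) > \alpha$ there. Consequently $g_e(\alpha)$ is a value in $(0,1)$ raised to the power $c_e\Delta^*$, hence it is largest on the arc with the smallest value of $c_e\Delta^*$, namely an arc with $c_e = c_{min}$, where $c_e\Delta^* = c_{min}\Delta^* = \gamma^{-1}$ under the scaling $D_{max}=1$. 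It therefore suffices to enforce $g_e(\alpha)\leq \frac{\epsilon}{|E|}$ on this worst arc, which collapses the whole statement into a single scalar condition on $\alpha$.

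Taking logarithms of $\left[\frac{e^\alpha}{(1+\alpha)^{1+\alpha}}\right]^{1/\gamma} \leq \frac{\epsilon}{|E|}$ and setting $L = \ln(|E|\epsilon^{-1})$, this condition becomes $h(\alpha) := (1+\alpha)\ln(1+\alpha) - \alpha \geq \gamma L$. The function $h$ is continuous and strictly increasing on $[0,\infty)$, since $h'(\alpha) = \ln(1+\alpha)$, with $h(0)=0$ and $h(\alpha)\to\infty$; so there is a unique $\alpha^\star$ attaining equality, and every $\alpha \geq \alpha^\star$ is admissible. The remaining and principal task is to estimate the order of $\alpha^\star$, that is to prove $1+\alpha^\star = O\!\left(\frac{\gamma L}{\ln(\gamma L)}\right)$.

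To this end I would exhibit an explicit admissible value of the claimed order and verify it satisfies the condition. Substituting the ansatz $1+\alpha = \frac{c\,\gamma L}{\ln(\gamma L)}$ for a constant $c>1$ and using $\ln(1+\alpha) = \ln(\gamma L) - \ln\ln(\gamma L) + \ln c$, one obtains $(1+\alpha)\ln(1+\alpha) = c\,\gamma L\left(1 + \frac{\ln c - \ln\ln(\gamma L)}{\ln(\gamma L)}\right)$, whence $h(\alpha) = c\,\gamma L\bigl(1+o(1)\bigr) - \alpha \geq \gamma L$ as soon as $\gamma L$ is large enough: the leading term $c\,\gamma L$ strictly exceeds $\gamma L$, while the $-\alpha$ term and the $-\ln\ln(\gamma L)/\ln(\gamma L)$ correction are both of lower order in $\gamma L$. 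Since $h$ is increasing, this yields $\alpha^\star \leq \frac{c\,\gamma L}{\ln(\gamma L)} - 1$, which is the desired $O\!\left(\frac{\gamma L}{\ln(\gamma L)}\right)$ bound and completes the reduction chain announced before the statement.

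I expect the asymptotic verification in this last step to be the delicate part. One must track the $-\ln\ln(\gamma L)$ correction carefully to confirm it is genuinely dominated by the $c\,\gamma L$ leading term, and one must separately handle the regime where $\gamma L$ is small (for instance $\gamma L \leq e$, where $\ln(\gamma L)$ is no longer a useful denominator), absorbing those finitely controlled cases into the constant hidden by the $O(\cdot)$ notation. By contrast, the earlier steps—the reduction to the worst arc via monotonicity of $g_e$ in $c_e\Delta^*$, and the monotonicity and boundary behaviour of $h$—are routine.
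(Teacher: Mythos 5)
Your proposal is correct and follows essentially the same route as the paper: both reduce the claim via Lemma \ref{lem:lemma2} to the single worst-arc condition $(1+\alpha)\ln(1+\alpha)-\alpha \ge \gamma\ln(|E|\epsilon^{-1})$ (the arc with $c_e=c_{min}$) and then show the threshold value of $1+\alpha$ is $O\bigl(\gamma L/\ln(\gamma L)\bigr)$ with $L=\ln(|E|\epsilon^{-1})$. The only difference is in the final elementary step --- the paper bounds the solution of the equation from above via $\sqrt{B}\le\alpha\le B+2$ and $1+\alpha\le\frac{2B+2}{\ln(1+\sqrt{B})}$, whereas you verify an explicit ansatz $1+\alpha=c\,\gamma L/\ln(\gamma L)$ --- and both routes are valid (including your caveat about the small-$\gamma L$ regime, which the paper's asymptotic $\sim\frac{4B}{\ln B}$ implicitly shares).
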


\begin{proof}

Lemma \ref{lem:lemma2} gives us $\mathbb{P}(F_{ek_1} + F_e \geq (1+\alpha)c_e\Delta^*) \leq \left[\frac{e^\alpha}{(1+\alpha)^{1 + \alpha}}\right]^{c_e\Delta^*}$. To ensure the veracity of the theorem we need to show that there exists a scalar $1+\alpha$ which is $O\left(\frac{\gamma \ln(|E|\epsilon^{-1})}{\ln(\gamma \ln(|E|\epsilon^{-1}))}\right)$ and for every arc satisfies:

$$\left[\frac{e^\alpha}{(1+\alpha)^{1 + \alpha}}\right]^{c_e\Delta^*} \leq \frac{\epsilon}{|E|}$$

$$\Longleftrightarrow$$
$$(1+\alpha)\ln(1+\alpha) - \alpha \geq \frac{\ln(|E|\epsilon^{-1})}{c_e\Delta^*}$$

For the arc of capacity $c_{min}$ which gives the highest bound, the lower bound is $B = \gamma \ln(|E|\epsilon^{-1})$ (recall that $D_{max} = 1$). Thus we study the solution of the equation $(1+\alpha)\ln(1+\alpha) - \alpha = B$ and show that it satisfies $1+\alpha = O \left(\frac{B}{\ln(B)} \right)$. By replacing $\ln(1+x)$ with the classical bounds $2\frac{x-1}{x+1} \leq \ln(1+x) \leq x$, we have:
\begin{align*}
& \alpha^2 \geq (1+\alpha)\ln(1+\alpha) - \alpha  \geq \alpha - 2\\
& \Longleftrightarrow \quad \alpha^2 \geq B \geq \alpha - 2\\
& \Longleftrightarrow \quad \sqrt{B} \leq \alpha \leq B + 2
\end{align*}
Which implies:
$$B =  (1+\alpha) \ln(1+\alpha) - \alpha \geq (1+\alpha) \ln(1+\sqrt{B}) - B - 2$$
and finally,
$$1+\alpha \leq \frac{2B+2}{\ln(1+\sqrt{B})} \sim \frac{4B}{\ln(B)} = O\left(\frac{B}{\ln(B)}\right)$$

Thus, the solution of the equation $(1+\alpha)\ln(1+\alpha) - \alpha = \gamma \ln(|E|\epsilon^{-1})$ satisfies $1+\alpha = O\left(\frac{\gamma \ln(|E|\epsilon^{-1})}{\ln(\gamma \ln(|E|\epsilon^{-1}))}\right)$ and for this value of $\alpha$, we have 
$$\mathbb{P} \left(F_{ek_1} + F_e \geq (1+\alpha) c_e \Delta^* \right) \leq \frac{\epsilon}{|E|}$$.
\qed
\end{proof}

Using Theorem \ref{theorem1}, we showed that, for any instance of the unsplittable flow problem, the CSRR algorithm returns a solution whose congestion is less than $O\left(\frac{\gamma \ln(|E|\epsilon^{-1})}{\ln(\gamma \ln(|E|\epsilon^{-1}))} \Delta^* \right)$ with probability $1-\epsilon$. As for the RR algorithm of \citet{raghavan1987randomized}, Lemma \ref{lem:lemma2} and Theorem \ref{theorem1} still apply even though the demonstration of Lemma \ref{lem:lemma2} is simpler because commodity $k_1$ does not need to be treated separately. We thus have the same approximation results for the RR algorithm.

During the resolutions of the linear relaxation of the CSRR algorithm, the constraint \eqref{added_constraint} is very restrictive. It almost does not leave any flexibility to the variables to move away from the previous optimum. 

\textit{Example: We call here first linear relaxation the linear program solved before any commodity is fixed to a single path and second linear relaxation the linear program solved after the first set of commodities is fixed to a single path. For the first and second linear relaxation respectively, we note $f_e^1$ and $f_e^2$ the total flow of all the commodities that are not fixed in the first rounding step. Let us suppose that the first linear relaxation has a unique optimal solution in which every arc has the same congestion $\Delta^*$. Then, in the second linear relaxation, constraint \eqref{added_constraint} ensures that, $\forall e \in E, f_e^1 \geq f_e^2$. In this case, we must have $\forall e \in E, f_e^1 = f_e^2$. Otherwise, the flow of the unfixed commodities in the first linear relaxation could be replaced by the same flow from the second linear relaxation thus creating a new optimal solution of the first linear relaxation. This would contradict the assumption that the first linear relaxation has a unique optimal solution. Thus, in this example, the variables of the second linear relaxation cannot move away from the previous optimum at all.}

This example highlights that the second linear relaxation can only move away from the solution of the first linear relaxation by using the slack left between the congestion of each arc and $\Delta^*$. Thus, when constraint \eqref{added_constraint} is used, the actualization step only has a small impact on the linear relaxation's solution and the CSRR algorithm does not plainly benefit from the actualization step. We conjecture that it is why the CSRR algorithm does not yield experimentally good results compared to the SRR heuristic (see experimental results of the CSRR algorithm in Section \ref{sec:results}). To overcome this problem, we replace constraint \eqref{added_constraint} by the following constraint: 
\begin{alignat}{1}
\label{new_added_constraint}
\sum_{k \in K_{free}} f_{ek} D_k \leq \beta c_e \Delta^* - \sum_{k \in K_{fixed}} \hat{f}_{ek} D_k, ~ \forall e \in E.
\end{alignat}
By replacing $\Delta^*$ by $\beta \Delta^*$ and $\gamma$ by $\beta^{-1} \gamma$ for some $\beta \geq 1$ in the proof, one can prove that using constraint \eqref{new_added_constraint} the approximation factor of the CSRR algorithm becomes $O\left(\frac{\gamma \ln(|E|\epsilon^{-1})}{\ln(\beta^{-1} \gamma \ln(|E|\epsilon^{-1}))}\right)$ which is still equal to $O\left(\frac{\gamma \ln(|E|\epsilon^{-1})}{\ln(\gamma \ln(|E|\epsilon^{-1}))}\right)$ for any fixed $\beta$. Unlike constraint \eqref{added_constraint}, in most practical cases, constraint \eqref{new_added_constraint} is not active in the optimal solutions of the linear relaxations even for $\beta = 1.1$. Thus, constraint \eqref{new_added_constraint} has no impact on the  practical computations and enables the CSRR algorithm to yield the same experimental results as the SRR heuristic.

To summarize, in this section, we added a constraint to the SRR heuristic and switched back to the congestion formulation to create an algorithm that has the same approximation guarantees as the RR algorithm of \citet{raghavan1987randomized}. We also tightened the approximation analysis of both algorithms by introducing a granularity parameter $\gamma$. This parameter has the property of remaining constant when commodities and arc capacities are uniformly scaled. Finally, we slightly modified the added constraint to alleviate its impact in practice. This modification increased the approximation factor by a negligible value.

\section{Experimental results}
\label{sec:results}

In this section, we present experiments that support our claims: the SRR heuristic has a lower computing time on large instances than exact methods and yields solutions with a better overflow than the algorithm of \citet{raghavan1987randomized} and meta-heuristics. The impact of sorting the commodities in decreasing order of demand in the randomized rounding algorithms is also investigated. Moreover, the SRR heuristic is compared to the CSRR approximation algorithm. The datasets and the code used in the experimental section of this work are accessible at \url{https://github.com/SuReLI/randomized_rounding_paper_code}. All the code for this work was written in Python 3. The experiments were made on a server with 48 CPU Intel Xeon E5-2670 2.30GHz, 60 Gbit of RAM, and CentOS Linux 7.

In this section, each figure presents the results for a dataset of instances. Each dataset features ten groups of a hundred instances with each group containing instances created using the same parameters. An exception is made in Figure \ref{size_scaling_MILP} in which each group contains only ten instances. Each point in a figure reports, for an algorithm, the average result of one group of instances. The $95 \%$ confidence intervals are also represented as semi-transparent boxes around the main curve.

\subsection{Instance datasets}

As stated in \citep{masri2019metaheuristics}, no standard benchmark of instances is present in the literature for the unsplittable flow problem, especially for large instances. Indeed, most works use small graphs (less than 50 nodes) on which they manually generate a set of commodities. The largest instances (100 nodes) can be found in the work of \citet{masri2015multi} and \citet{li2010ant}. \citet{masri2015multi} use a grid graph while  \citet{li2010ant} create their graphs with an adaptation of the graph generator NETGEN. To compensate for this absence of benchmark, we give a detailed explanation of our instance generation procedure. Moreover, all our instances are given in our GitHub repository together with the code used for their generation.

In our tests, we consider two types of graphs: strongly connected random graphs and grid graphs. For strongly connected random graphs, we use the following method to construct a random very sparse strongly connected graph: select a random node $u$, select a random $v$ such that there is no path from $u$ to $v$, add an arc $(u,v)$ to the graph, repeat until the graph is strongly connected. Afterward, random edges are added to control the average degree of the graph (it cannot be less than 2). In our tests, the average degree is fixed to $5$ and the probability of a node being an origin is $1/10$. A grid graph is an $n \times m$ toric grid with $p$ additional nodes. Each additional node is randomly connected to $q$ nodes on the grid and serves as an origin of the flow. In our tests, we use $m = n =  p = \frac{q}{2}$. Unless mentioned otherwise, the arc capacities are $10^4$.

For both types of graphs the demand is created as follows until no more commodity can be added without breaking the capacity constraints:
\begin{itemize}
    \item choose a destination node $d$;
    \item choose an origin $o$ which can access $d$ within the remaining capacities;
    \item compute a random simple path $p$ from $o$ to $d$ using a depth-first search where the visit order of newly discovered nodes is random;
    \item choose a demand level $D$ uniformly between 1 and $\hat{D}_{\max}$ where $\hat{D}_{\max}$ is a parameter defining the maximum possible demand of a commodity; if the chosen demand level is larger than the remaining capacity on $p$ then truncate it to the remaining capacity;
    \item decrease the capacities on the path $p$ by $D$;
    \item add $(o, d, D)$ to the list of created commodities;
\end{itemize}

Note that demands created this way can always be routed within the arc capacities and the optimal congestion is one. Thus, we know optimal solutions have zero overflow and a congestion of one. Hence, these optimal values do not need to be computed with exact optimization methods. Moreover, the parameter $\hat{D}_{\max}$ used to parametrize the size of the commodities and thus the number of commodities. Unless mentioned otherwise the value of $\hat{D}_{\max}$ is fixed to 1500 in our tests.

\subsection{Benchmarking meta-heuristics}
\label{sec:simulated_annealing}

In this section, we present a benchmark of ACO-MC, one of the ant colony optimization algorithms presented in \citep{li2010ant}, and the variable neighborhood search of \citet{masri2015multi}. Both algorithms were reproduced and the code used is given in our GitHub repository. We compare these algorithms with a handcrafted simulated annealing. The goal is to use only the best one as a comparison for the SRR algorithm in the next sections. The algorithm of \citet{li2010ant} was coded exactly as presented in their paper.

\textbf{Implementation of \citep{masri2015multi}:} due to differences in the considered unsplittable flow problem, the local search part of their algorithm had to be modified. Their local search creates a new path for a commodity by starting from its destination and adding arcs to the path until the origin is reached. At each step, the next arc is chosen by considering the following heuristic information for each out-going arc of the last node of the current path:
$$I_e = \frac{1}{l_e} + \left (1 - \frac{1}{\hat{c}_e} \right)$$
where $l_e$ is the lead time of arc $e$ (\emph{i.e.} its length) and $\hat{c}_e$ is the remaining capacity on arc $e$. As we do not have lead times for each arc in our problem, $l_e$ was set to 1 for each arc. Moreover, in the problem studied in \citep{masri2015multi}, the remaining capacity $\hat{c}_e$ is positive because no overflow is allowed. Because we can have negative remaining capacities, we replace the function $f : x \rightarrow 1 - \frac{1}{x}$ applied to $\hat{c}_e$ by $g : x \rightarrow \frac{1}{2} \left(1 + \frac{x}{1 + |x|}\right)$. Function $g$ was chosen to have similar properties to the function $f$.
$$\forall x \in ~ (1, +\infty), ~ 0 < f(x) < 1$$ 
$$\forall x \in ~ (-\infty, +\infty), ~ 0 < g(x) < 1$$
$$f(x) - 1 \underset{+\infty}{\sim} g(x) - 1 \underset{+\infty}{\sim} \frac{-1}{x}$$
$$g(x) \underset{-\infty}{\sim} \frac{-1}{x}$$
In our tests, we also present the results obtained with a version of the algorithm of \citet{masri2015multi} where the local search part is disabled.

\textbf{Our simulated annealing:} at each iteration, a solution is created in the neighborhood of the current solution; this modification is accepted with a probability depending on the improvement/deterioration of the solution and a temperature parameter. At each iteration, the temperature parameter is multiplied by $1 - \epsilon$ for some small $\epsilon$ depending on the number of iterations. At the beginning of the simulated annealing procedure and similarly to the algorithm of \citet{masri2015multi}, a list of $k$-shortest paths of length 10 is computed for each commodity using the algorithm of \citet{jimenez1999computing}. To initialize the solution, each commodity takes a random path in its list of $k$-shortest paths. At each iteration, a neighborhood solution is created by replacing the path of a commodity with a path randomly chosen in the list of $k$-shortest paths of the commodity. The stopping criterion is the total number of iterations.

\textbf{Hyper-parameter setting:} we benchmarked different hyper-parameters values for ACO-MC but finally settled to use the same values as in \citep{li2010ant}. Except for the size of the largest neighborhood which is not mentioned, both versions of the variable neighborhood search use the hyper-parameter values given in \citep{masri2015multi}. After testing different values, the size of the largest neighborhood was set to 3.  The hyper-parameters of the simulated annealing are the initial and final temperature chosen to be respectively 200 and 1. The simulated annealing (SA) is given $2 |K|^{1.5}$ iterations so that it takes a time comparable to the SRR algorithm in the next sections. ACO-MC and the VNS of \citet{masri2015multi} were given respectively 50 and 100 iterations. Although these numbers seem very low, one must consider that for each of their iterations, the algorithms of \citet{li2010ant} and \citet{masri2015multi} respectively generate $|K|$ and $\frac{|K|}{2}$ new paths. With these numbers of iterations, they already require a longer computing time than the simulated annealing procedure. Finally, the variation of the variable neighborhood search where the local search part is disabled (VNS2) was given $|K|^{1.5}$ iterations.

\textbf{Results:} Figure \ref{size_scaling_metaheuristics} presents the results of each algorithm on a dataset composed of grid graphs of varying sizes. The simulated annealing procedure clearly outperforms the other algorithms on this dataset. We obtained similar results on all other datasets. That is why the simulated annealing algorithm has been chosen in the next sections to be the comparison point for the SRR algorithm.

\begin{figure}[ht]
     \centering
     \begin{subfigure}[b]{0.49 \columnwidth}
         \centering
         \includegraphics[width=\textwidth]{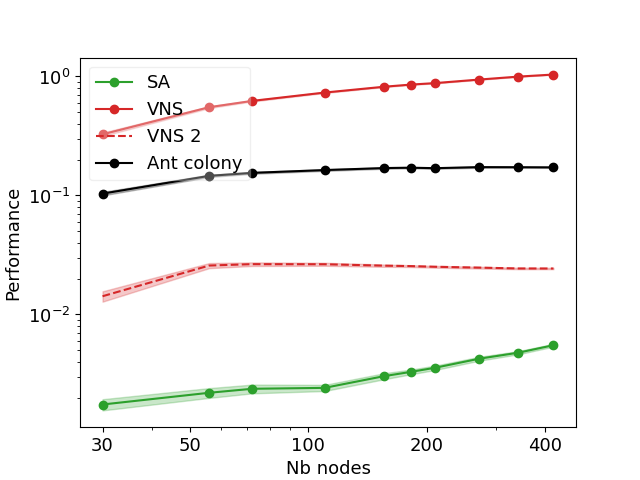}
         \caption{Overflow divided by the total demand --- grid graphs}
         \label{size_scaling_metaheuristics_res}
     \end{subfigure}
     \hfill
     \begin{subfigure}[b]{0.49 \columnwidth}
         \centering
         \includegraphics[width=\textwidth]{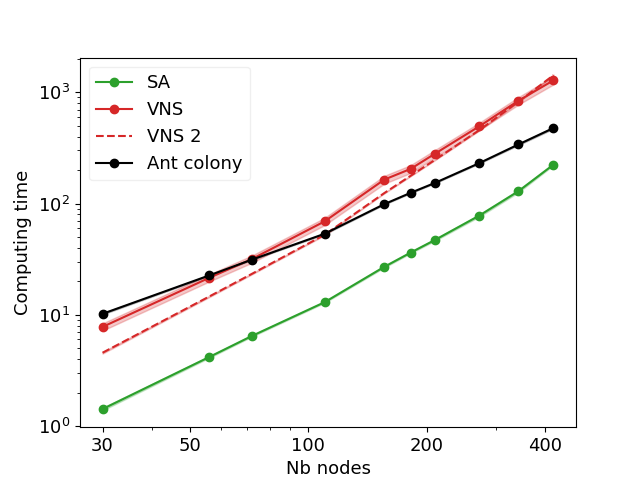}
         \caption{Computing time in seconds --- grid graphs}
         \label{size_scaling_metaheuristics_time}
     \end{subfigure}
        \caption{Performance and computing time versus the number of nodes of various meta-heuristics}
        \label{size_scaling_metaheuristics}
\end{figure}

\textbf{Discussion:} we now try to explain why the meta-heuristics from the literature are outperformed. At each iteration, the ant colony procedure spends most of its computing time creating an entirely new solution. Indeed, a new path is created for each commodity for a total of $|K|$ paths generated. Conversely, at each iteration, the simulated annealing procedure chooses only one path in a list (choosing a path is much faster than creating one). This leads to a much larger number of iterations for the simulated annealing procedure and thus a lot more solutions evaluated. As for the algorithm of \citet{masri2015multi}, it seems that the local search part performs poorly in our tests. This is partly because the path generation procedure does not know where its target node is until it is encountered. Indeed, the heuristic information used to choose the next arc of the path considers the length $l_e$ of the arcs and not the length of the shortest path to the target node. Thus, when choosing the next arc, the procedure does not know if it goes toward or away from its goal. We tried to replace $l_e$ by the length of the shortest path to the target node in the local search but disabling it completely still yielded better solutions. Finally, it appears that variable neighborhood search is not a good choice of meta-heuristic for our version of the unsplittable flow problem (especially when the number of commodities is more than a thousand). Indeed, by changing the size $k_{max}$ of the largest neighborhood considered we obtained that the best value for this parameter is $k_{max} = 1$ which is the case where only the smallest neighborhood is considered.

\subsection{Results}

We compare the following algorithms in our tests :

\begin{itemize}
    \item RR: the randomized rounding algorithm of \citet{raghavan1987randomized};
    \item RR sorted: a version of the RR algorithm where the commodities are rounded in order of decreasing demand;
    \item SRR: the Sequential Randomized Rounding heuristic described in Section \ref{sec:algo};
    \item SRR unsorted: a version of the SRR heuristic where the commodities are not rounded in order of decreasing demand but in random order;
    \item CSRR: the version of the SRR algorithm with approximation properties presented in Section \ref{sec:proof};
    \item SA and SA2: a handcrafted simulated annealing procedure presented in Section \ref{sec:simulated_annealing}; SA is given $2 |K|^{1.5}$ iterations to have a similar computing time to SRR on grid graphs while SA2 is given $6 |K|^{1.5}$ iterations;
    \item MILP solver: arc-node formulation solved with Gurobi 8.11.
\end{itemize}

All the randomized rounding algorithms in this list use the overflow-sum objective in the linear relaxation to create their solutions.

An important characteristic of the SRR heuristic is its capacity to scale to large instances. Thus we first illustrate how the algorithm outperforms a MILP method. In Figure \ref{size_scaling_MILP}, a MILP method based on the arc-node formulation is compared with the RR and the SRR algorithms on small grid graphs with very few commodities. The MILP algorithm solves optimally all the small instances but gives poor results on large instances within a 20 minutes time limit. In comparison, the other methods retain a reasonable performance on large instances. For these instances, the capacity of the arcs is 3 and the maximum demand of a commodity is 2 to keep the number of commodities and variables of the MILP formulation low. Due to the large amount of memory required, the MILP algorithm could not be tested on the other larger datasets.

\begin{figure}[ht]
     \centering
     \begin{subfigure}[b]{0.49 \columnwidth}
         \centering
         \includegraphics[width=\textwidth]{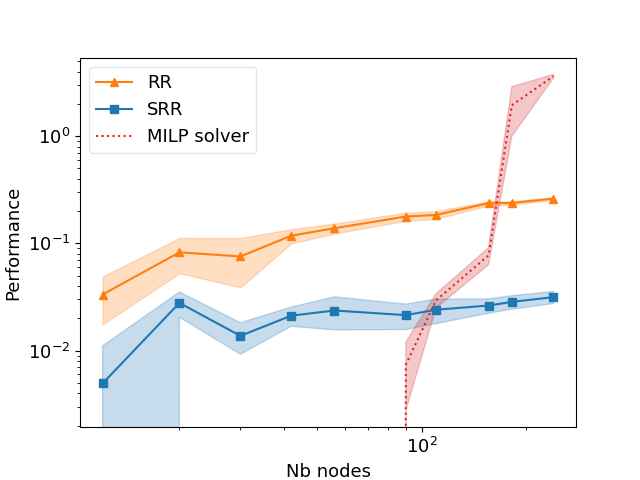}
         \caption{Overflow divided by the total demand --- grid graphs}
         \label{size_scaling_MILP_res}
     \end{subfigure}
     \hfill
     \begin{subfigure}[b]{0.49 \columnwidth}
         \centering
         \includegraphics[width=\textwidth]{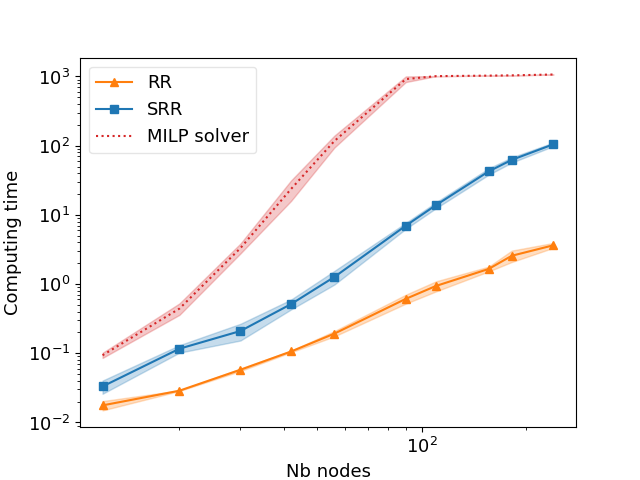}
         \caption{Computing time in seconds --- grid graphs}
         \label{size_scaling_MILP_time}
     \end{subfigure}
        \caption{Performance and computing time versus the number of nodes on small instances}
        \label{size_scaling_MILP}
\end{figure}

Figure \ref{size_scaling} compares different randomized rounding algorithms and the simulated annealing algorithm on grid graphs and random connected graphs. In both cases, the SRR algorithm requires a larger computing time than pure randomized rounding but returns solutions of much higher quality. For small instances of grid graphs, the best results are given by the simulated annealing procedure. However, as the number of nodes increases the SRR heuristics outperforms the simulated annealing procedure with the same computing time (SA). Furthermore, on the largest instances, the simulated annealing procedure is outperformed by the SRR heuristics even when given three times more computing time (SA2). As for strongly connected random graphs, with the described settings, the SRR heuristic returns the highest quality solutions but requires a larger computing time than the simulated annealing. It appears that solving the linear relaxation takes a longer time on random graphs than on grid graphs. The CSRR algorithm returns worse solutions than the SRR heuristic in a similar computing time. As explained at the end of Section \ref{sec:proof}, this is due to constraint \eqref{added_constraint} added in the resolution of the linear relaxation. One can relax this constraint into constraint \eqref{new_added_constraint} while keeping a similar approximation factor. In practice, constraint \eqref{new_added_constraint} is never active in the optimal solution of the linear relaxation. Thus, with this adaptation, the CSRR algorithm returns exactly the same solutions as the SRR heuristic.

\begin{figure}[ht]
     \centering
     \begin{subfigure}[b]{0.49 \columnwidth}
         \centering
         \includegraphics[width=\textwidth]{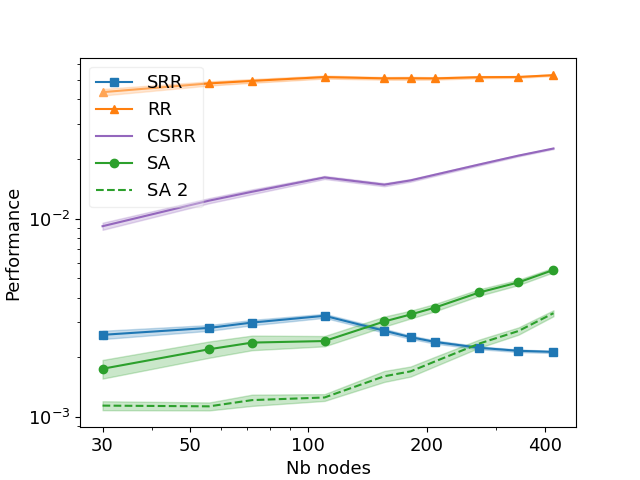}
         \caption{Overflow divided by the total demand --- grid graphs}
         \label{size_scaling_res}
     \end{subfigure}
     \hfill
     \begin{subfigure}[b]{0.49 \columnwidth}
         \centering
         \includegraphics[width=\textwidth]{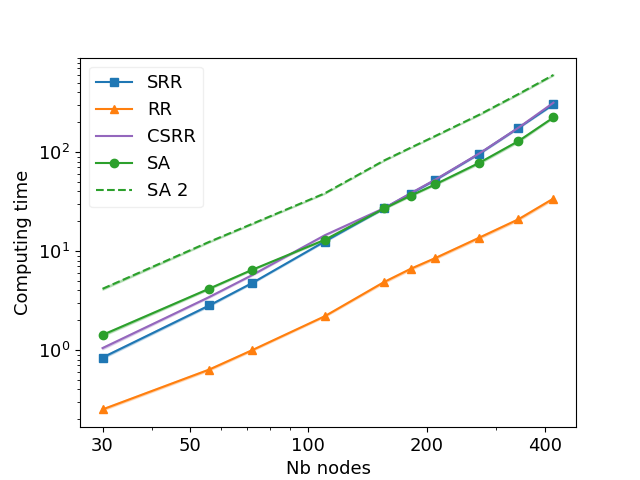}
         \caption{Computing time in seconds --- grid graphs}
         \label{size_scaling_time}
     \end{subfigure}
     \begin{subfigure}[b]{0.49 \columnwidth}
         \centering
         \includegraphics[width=\textwidth]{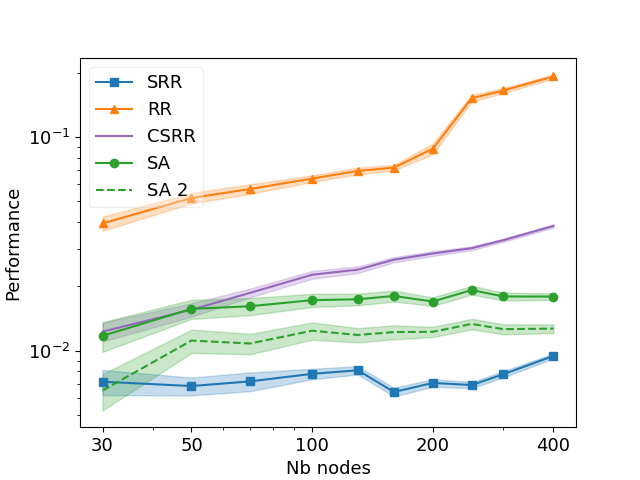}
         \caption{Overflow divided by the total demand --- random graphs}
         \label{size_scaling_random_res}
     \end{subfigure}
     \hfill
     \begin{subfigure}[b]{0.49 \columnwidth}
         \centering
         \includegraphics[width=\textwidth]{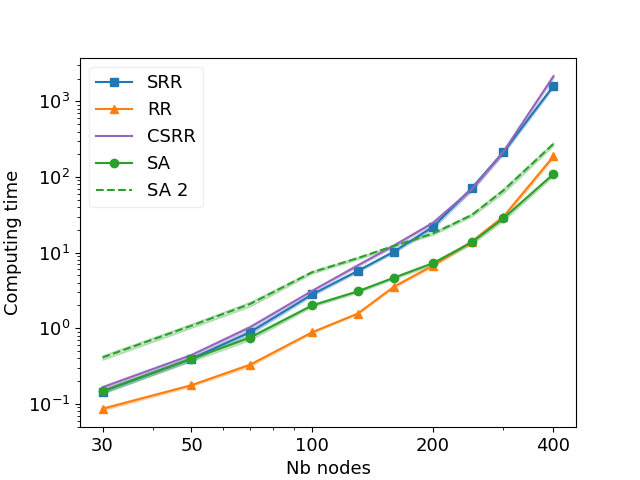}
         \caption{Computing time in seconds --- random graphs}
         \label{size_scaling_random_time}
     \end{subfigure}
        \caption{Performance and computing time versus the number of nodes on large instances}
        \label{size_scaling}
\end{figure}

Figure \ref{commodity_scaling} reports the solution performance and the computing time for grid graphs with a constant number of nodes and arcs but a various number of commodities. This variation is obtained by changing the ratio of the arc capacities over the maximum demand of a commodity. Table \ref{setting_commodity_scaling} reports the parameters used to create the instances. As can be seen in Figure \ref{commodity_scaling_res}, all algorithms produce higher quality solutions when there is a large number of commodities (\emph{i.e.} commodities have small demand compared to the arc capacities). As explained in Section \ref{sec:proof}, for randomized rounding algorithms, this practical result can be related to the presence of the granularity parameter $\gamma$ in the approximation factor of the RR and CSRR algorithms. Figure \ref{commodity_scaling_time} shows that our use of an aggregated arc-node formulation enables the randomized rounding algorithms to have a computing time that scales very well with the number of commodities. However, this is not the case for simulated annealing which requires a large increase in its number of iterations to obtain the solution reported in Figure \ref{commodity_scaling_res}.

\begin{figure}[ht]
     \centering
     \begin{subfigure}[b]{0.49 \columnwidth}
         \centering
         \includegraphics[width=\textwidth]{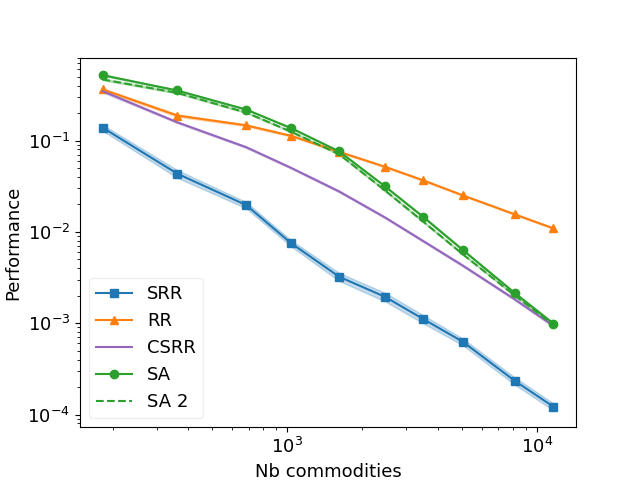}
         \caption{Overflow divided by the total demand --- grid graphs}
         \label{commodity_scaling_res}
     \end{subfigure}
     \hfill
     \begin{subfigure}[b]{0.49 \columnwidth}
         \centering
         \includegraphics[width=\textwidth]{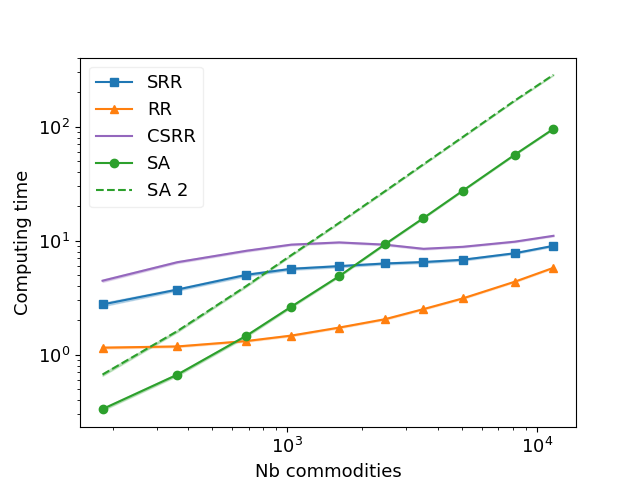}
         \caption{Computing time in seconds --- grid graphs}
         \label{commodity_scaling_time}
     \end{subfigure}
        \caption{Performance and computing time versus the number of commodities}
        \label{commodity_scaling}
\end{figure}

We now analyze the impact of rounding the commodities in order of decreasing demand, based on the results presented in Figure \ref{size_scaling_order}. Both pure randomized rounding and the SRR heuristic yield solutions of higher quality when the commodities are sorted. The impact is significantly higher on the SRR algorithm than on pure randomized rounding. A possible explanation for such a strong impact is given in Section \ref{sec:synergy}. At first glance, a rounding order should have no impact on pure randomized rounding since the commodities are rounded independently. However, results clearly indicate otherwise and this is due to how the linear relaxation is computed in our experiments. Indeed when using the arc-node formulation of Section \ref{sec:arc-node-formulation} and the commodity aggregation presented in Section \ref{sec:aggregation}, the linear program does not directly yield a flow distribution for each commodity. The flow distribution returned is for the aggregated commodities. To get the flow of each commodity, a flow decomposition algorithm is used. This algorithm tends to split less the commodities that are decomposed first. Thus when the commodities with the largest demands are rounded first, they are also decomposed first. This means the biggest commodities have a lower chance of being split and thus a lower chance that their rounding creates a large overflow.

\begin{figure}[ht]
     \centering
     \begin{subfigure}[b]{0.49 \columnwidth}
         \centering
         \includegraphics[width=\textwidth]{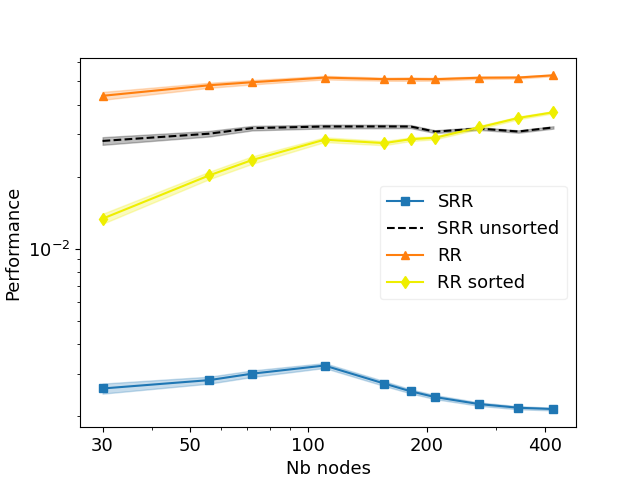}
         \caption{Overflow divided by the total demand --- grid graphs}
         \label{size_scaling_order_res}
     \end{subfigure}
     \hfill
     \begin{subfigure}[b]{0.49 \columnwidth}
         \centering
         \includegraphics[width=\textwidth]{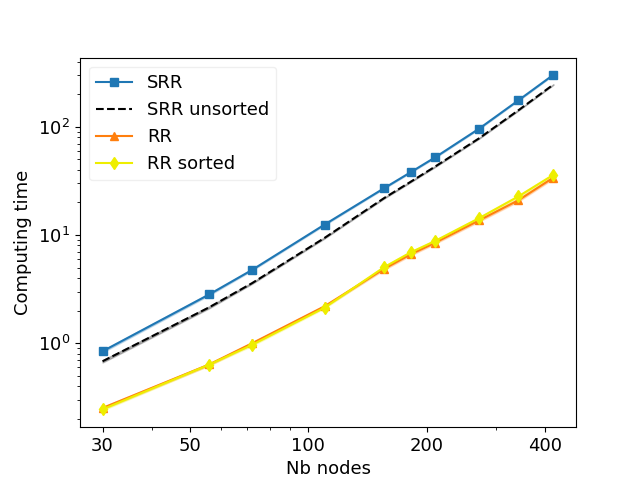}
         \caption{Computing time in seconds --- grid graphs}
         \label{size_scaling_order_time}
     \end{subfigure}
        \caption{Influence of commodity sorting on performance and computing time}
        \label{size_scaling_order}
\end{figure}

Finally, we describe how we chose the value $\frac{|V|}{4}$ for the actualization parameter $\theta$ of the SRR algorithm presented in Section \ref{sec:algo}. Setting this value implies making a trade-off between solution quality and computing time. Figure \ref{threshold_scaling} compares the performance of SRR for different values of this parameter, on a grid graph with 110 nodes. The computing time of the SRR heuristic affinely decreases with the $\theta$ parameter. However, the overflow of the returned solution decreases less and less when $\theta$ decreases. We decided to choose $\theta$ such that the returned solution is close to the best obtainable solution while a lot of computing time is saved compared to choosing $\theta$ close to zero. When repeating this experiment on graphs of different sizes and types, it appeared that the chosen trade-off value was close to $\frac{|V|}{4}$. This value is represented with a dashed line in Figure \ref{threshold_scaling}.

\begin{figure}[ht]
     \centering
     \begin{subfigure}[b]{0.49 \columnwidth}
         \centering
         \includegraphics[width=\textwidth]{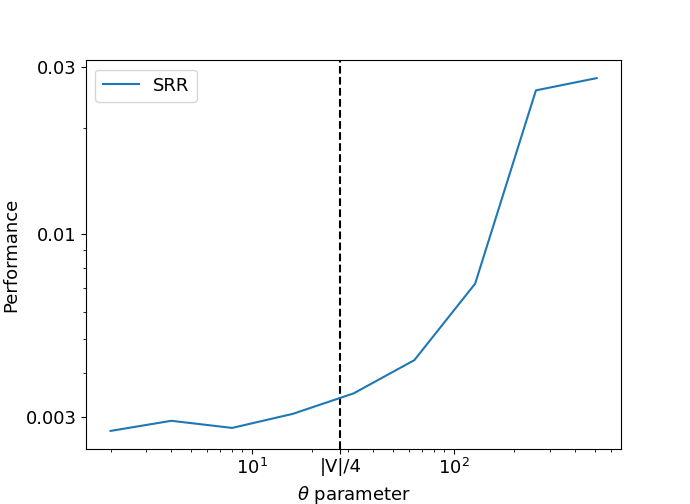}
         \caption{Overflow divided by the total demand --- grid graphs}
         \label{threshold_scaling_res}
     \end{subfigure}
     \hfill
     \begin{subfigure}[b]{0.49 \columnwidth}
         \centering
         \includegraphics[width=\textwidth]{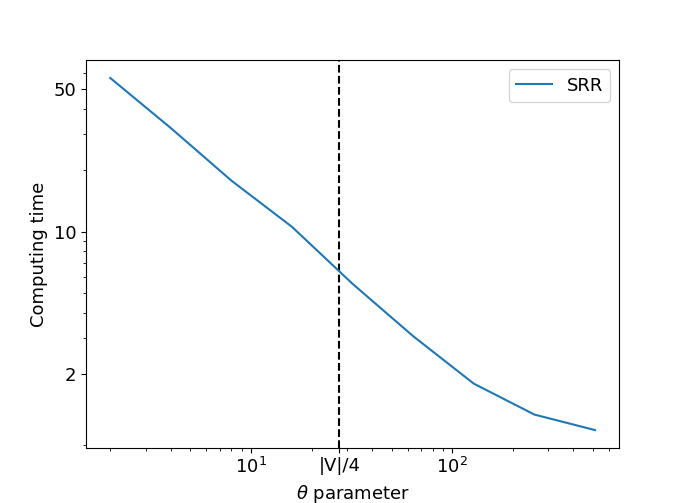}
         \caption{Computing time in seconds --- grid graphs}
         \label{threshold_scaling_time}
     \end{subfigure}
        \caption{Performance and computing time versus the value of the $\theta$ parameter of SRR}
        \label{threshold_scaling}
\end{figure}

\begin{table}
\centering
\begin{tabular}{|l|c|c|c|c|c|}
\hline
Arc capacities                & 1 & 2 & 5 & 10 & 20 \\ \hline
Maximum demand of a commodity & 1 & 1 & 2 & 3  & 4 \\ \hline
Average number of commodities & 182 & 362 & 685 & 1038 & 1615 \\ \hline
 \hline
Arc capacities & 50 & 100 & 200 & 500 & 1000 \\ \hline
Maximum demand of a commodity & 7  & 10  & 14  & 22  & 31   \\ \hline
Average number of commodities & 2462 & 3512 & 5048 & 8138 & 11644 \\ \hline
\end{tabular}
\caption{Parameters of the instances of the commodity scaling dataset : the graphs have 110 nodes and 580 arcs}
\label{setting_commodity_scaling}
\end{table}

\section{Discussion}
\label{sec:discussion}

In this section, we discuss different properties of the randomized rounding algorithms. We first show that using the overflow sum objective function presented in Section \ref{two_criteria} instead of the classical congestion objective function has a positive impact on practical results. Then we highlight a positive interplay between sorting the commodities and actualizing the linear relaxation.

\subsection{Impact of the objective function}

\label{compare_criteria}

Two types of objective function coexist in the SRR heuristic. The first one is the objective function of the unsplittable flow problem which will be called \emph{evaluation metric} in this section. The second one is the objective function used in the resolutions of the linear relaxation inside the SRR heuristic which we will call \emph{generation objective}. For both of the previous types, we study the two functions presented in Section \ref{two_criteria}: the sum of the overflow and the congestion. Usually, the generation objective and the evaluation metric are chosen to be the same. However, the overflow sum and the congestion are very close metrics and we want to study the impact of using one for creating solutions for the other. We will show below that using the overflow sum instead of the congestion as generation objective yields solutions of higher quality for both evaluation metrics. 

When the evaluation metric is the overflow sum, in our tests, solutions generated using the overflow sum as generation objective have a ten times better solution quality than the one created using the congestion. More surprisingly, the overflow sum used as generation objective also yields the best solutions when the evaluation metric is the congestion. To test this, the SRR heuristic has been applied with the two generation objectives on 100 grid graphs with 120 nodes, 700 arcs, and 4500 commodities. The mean congestion of the returned solutions is reported in the first two columns of Table \ref{congestion_comparison} together with the standard deviation of the results. The mean congestions for the two generation objectives are separated by more than one standard deviation in favor of the overflow sum. Using the values given in the first two columns of Table \ref{congestion_comparison}, an unpaired two-sample Student t-test was made. The test showed that the overflow sum used as generation objective performs significantly better than the congestion. The $p$-value associated with the test is $10^{-18}$.

\begin{table}[]
\centering

\begin{tabular}{|l|c|c|c|}
\hline
                   & Overflow sum & Congestion & Mixed \\ \hline
Mean               & 1.025        & 1.049      & 1.027 \\ \hline
Standard Deviation & 0.016        & 0.019      & 0.012 \\ \hline
\end{tabular}

\caption{Comparison of the congestion of the solutions returned by the SRR heuristic using different generation objectives.}

\label{congestion_comparison}
\end{table}

We conjecture that the difference observed above between the two generation objectives is explained by the following reasoning.  In the linear relaxation, the congestion objective function does not differentiate solutions having one arc with high congestion from solutions with several arcs with high congestion. However, a larger number of highly congested arcs in the linear relaxation implies, on average, a larger congestion in the solution returned by the randomized rounding process. Furthermore, if the randomized rounding process creates arcs with a larger congestion than the optimal congestion of the linear relaxation, then, in subsequent actualizations of the linear relaxation, the capacity constraints are lifted for all the other arcs. This might lead to even more congested unsplittable solutions.

With this conjecture in mind, we created an algorithm that uses the congestion as its main generation objective but performs as well as the algorithm using the overflow sum as generation objective. To that end, a two-level objective is set in the linear relaxation: find the solution of least overflow sum among the solutions with minimal congestion. The results obtained with this alternative generation objective are presented in the third column of Table \ref{congestion_comparison} under the name ``Mixed'' objective. These results are comparable to those obtained with the overflow sum objective. Indeed, an unpaired two-sample Student t-test made on the values given in the first and third columns of Table \ref{congestion_comparison} does not show that the difference is statistically significant. The $p$-value associated with the test is $0.32$.

\subsection{Interaction between sorting and actualization}
\label{sec:synergy}

In this section, we discuss the positive interplay observed in Section \ref{sec:results} between sorting the commodities by decreasing demand and actualizing the linear relaxation. To that end, we study a toy example where the graph is composed of two nodes linked by two parallels arcs. We also assume the sum of the demands is equal to the sum of the capacities. We compare the solutions with the overflow sum objective and assume that no binary solution has an overflow of zero. 

When computing the linear relaxation, optimal extrema of the polyhedron are of the following form: every commodity is assigned to one arc except for one commodity which is split between the two arcs. We assume that each commodity has the same probability of being the split commodity. After applying randomized rounding to a linear solution of this type, the binary solution obtained has an overflow between zero and the demand of the split commodity.

On the other hand, if we actualize the linear relaxation after fixing the split commodity we should often get a better solution. Indeed, after actualizing the linear solution, three situations can occur. Firstly, the remaining free commodities cannot switch arc to compensate any of the overflow generated by rounding the split commodity. In this case, no commodity changes its flow, the new linear solution is a binary solution and this solution is the same with and without actualization. Secondly, the remaining free commodities compensate only a part of the overflow created. In this case, even though some commodities change their flow, the new linear solution is a binary solution whose overflow is lower than the one before the actualization step. Lastly, the remaining free commodities compensate completely the overflow created and a new split commodity is created. Applying randomized rounding from there yields a solution whose overflow is between zero and the demand of the new split commodity. Since the commodities are fixed in order of decreasing demand, the new split commodity has a smaller demand than the old split commodity. The range in which the overflow of the final solution can vary is thus smaller which should on average lead to a better solution. Furthermore, new split commodities are created until the free commodities cannot completely compensate for the rounding of the last split commodity. Most of the time, this happens when only a few commodities remain and thus when only small commodities remain. In this case, the last split commodity has a small demand, and the algorithm yields a small final overflow.

Finally, we can see that in this toy example, actualizing the linear solution and sorting the commodities by decreasing demands yields, on average, solutions with a smaller overflow than pure randomized rounding. We conjecture that this phenomenon generalizes to any graph. Indeed, the fact that the split commodities get smaller and smaller as the algorithm progresses should lead to smaller and smaller final overflows.

\section{Conclusion}
\label{sec:conclusion}

In this paper, we have presented a heuristic based on randomized rounding for large-scale instances of the unsplittable flow problem which extends the algorithm of \citet{raghavan1987randomized}. We experimentally showed that on large-scale instances this heuristic produces solutions of smaller overflow than any other method used for comparison.

We also derived an approximation algorithm from the heuristic by restraining the possible actualization of the linear relaxation. The approximation factor of both this algorithm and the algorithm of \citet{raghavan1987randomized} was then tightened to $O\left(\frac{\gamma \ln(|E|\epsilon^{-1})}{\ln(\gamma \ln(|E|\epsilon^{-1}))}\right)$. This new approximation factor depends on the granularity parameter $\gamma$ and enables to understand the behavior of randomized rounding when the commodities are small compared to the capacities (\emph{i.e.}$\gamma \ll 1$).

Furthermore, the behavior of the presented heuristic has been analyzed to highlight two of its key particularities. First, the new objective function used in the algorithm for practical computations yielded solutions of higher quality. Secondly, the actualization of the solution of the linear relaxation enhanced the performances of the heuristic when the commodities are sorted in order of decreasing demand.

Finally, even though the techniques discussed in this paper were presented in the context of unsplittable flows, they apply to other contexts where the randomized rounding method of \citet{raghavan1987randomized} is used (packing problems, covering problems...). Their performances and variations in these contexts could be investigated. Moreover, the impact of backtracking of the decisions made during randomized rounding algorithms seems a promising research direction.


%

\bibliographystyle{spbasic}      

\bibliography{biblio.bib}

\end{document}